\theoremstyle{plain}
\newtheorem{theorem}{Theorem}[section]
\newtheorem{lemma}[theorem]{Lemma}
\theoremstyle{definition}
\newtheorem{definition}[theorem]{Definition}
\theoremstyle{remark}
\newtheorem{remark}[theorem]{Remark}
\icmltitlerunning{Almost Optimal Fully Dynamic $k$-Center Clustering with Recourse}
\newenvironment{wrapper}[1]
{
	\begin{center}
		\begin{minipage}{\linewidth}
			\begin{mdframed}[hidealllines=true, backgroundcolor=gray!20, leftmargin=0cm,innerleftmargin=0.2cm,innerrightmargin=0.2cm,innertopmargin=0.2cm,innerbottommargin=0.2cm,roundcorner=0pt]
				#1}
			{\end{mdframed}
		\end{minipage}
	\end{center}
}
\newtheorem{claim}[theorem]{Claim}
\DeclareMathOperator*{\poly}{poly}
\newcommand{\Opt}{\textsc{OPT}}
\newcommand{\cl}{\textnormal{cl}}
\newcommand{\cost}{\textnormal{cl}}
\newcommand{\DynMIS}{\textnormal{\texttt{DynamicMIS}}}
\newcommand{\Sparsifier}{\textnormal{\texttt{Sparsifier}}}
\newcommand{\BSparsifier}{\textnormal{\texttt{BufferedSparsifier}}}
\newcommand{\Dynk}{\textnormal{\texttt{Dynamic-$k$-Center}}}
\newcommand{\I}{\mathcal I}
\newcommand{\old}[0]{\text{old}}
\newcommand{\new}[0]{\text{new}}
\newcommand{\diam}[0]{\text{diam}}
\begin{document}

\twocolumn[
\icmltitle{Almost Optimal Fully Dynamic $k$-Center Clustering with Recourse}

\icmlsetsymbol{star}{*}
\icmlsetsymbol{dagger}{\textsuperscript{\textdagger}}

\begin{icmlauthorlist}
\icmlauthor{Sayan Bhattacharya}{warwick}
\icmlauthor{Mart\'{i}n Costa}{warwick}
\icmlauthor{Ermiya Farokhnejad}{warwick}
\icmlauthor{Silvio Lattanzi}{google}
\icmlauthor{Nikos Parotsidis}{google}
\end{icmlauthorlist}

\icmlaffiliation{warwick}{Department of Computer Science, University of Warwick}
\icmlaffiliation{google}{Google Research}

\icmlcorrespondingauthor{Sayan Bhattacharya}{s.bhattacharya@warwick.ac.uk}
\icmlcorrespondingauthor{Mart\'{i}n Costa}{martin.costa@warwick.ac.uk}
\icmlcorrespondingauthor{Ermiya Farokhnejad}{ermiya.farokhnejad@warwick.ac.uk}
\icmlcorrespondingauthor{Silvio Lattanzi}{silviol@google.com}
\icmlcorrespondingauthor{Nikos Parotsidis}{nikosp@google.com}

\icmlkeywords{Dynamic Algorithms, Clustering, $k$-Center}

\vskip 0.3in
]

\printAffiliationsAndNotice{}

\begin{abstract}
    In this paper, we consider the \emph{metric $k$-center} problem in the fully dynamic setting, where we are given a metric space $(V,d)$ evolving via a sequence of point insertions and deletions and our task is to maintain a subset $S \subseteq V$ of at most $k$ points that minimizes the objective $\max_{x \in V} \min_{y \in S}d(x, y)$. We want to design our algorithm so that we minimize its \emph{approximation ratio}, \emph{recourse} (the number of changes it makes to the solution $S$) and \emph{update time} (the time it takes to handle an update).
    We give a simple algorithm for dynamic $k$-center that maintains a $O(1)$-approximate solution with $O(1)$ amortized recourse and $\tilde O(k)$ amortized update time, \emph{obtaining near-optimal approximation, recourse and update time simultaneously}.
    We obtain our result by combining a variant of the dynamic $k$-center algorithm of Bateni et al.~[SODA'23] with the dynamic sparsifier of Bhattacharya et al.~[NeurIPS'23].
\end{abstract}

\section{Introduction}\label{sec:intro}
Clustering data is a fundamental task in unsupervised learning. In this task, we need to partition the elements of a dataset into different groups (called \emph{clusters}) so that elements in the same group are similar to each other, and elements in different groups are not.

One of the most studied formulations of clustering is \emph{metric $k$-clustering}, where the data is represented by points in some underlying metric space $(V,d)$, and we want to find a subset $S \subseteq V$ of at most $k$ \emph{centers} that minimizes some objective function.
Due to its simplicity and extensive real-world applications, metric $k$-clustering has been studied extensively for many years and across many computational models 
\cite{charikar1999constant,jain2001approximation,ahmadian2019better,byrka2017improved,charikar2003better, ailon2009streaming, shindler2011fast, borassi2020sliding}.
In this paper, we focus on the \emph{$k$-center} problem, where the objective function is defined as $\cl(S,
V) := \max_{x \in V} d(x, S)$, where $d(x, S) := \min_{y \in S} d(x,y)$. In other words, we want to minimize the maximum distance from any point in $V$ to its nearest point in $S$.

\textbf{The dynamic setting:}
In recent years, massive and rapidly changing datasets have become increasingly common. If we want to maintain a good solution to a problem defined on such a dataset, it is often not feasible to apply standard computational paradigms and recompute solutions from scratch using \emph{static} algorithms every time the dataset is updated.
As a result, to cope with real world-scenarios, significant effort has gone into developing \emph{dynamic} algorithms that are capable of efficiently maintaining solutions as the underlying dataset evolves over time \cite{soda/BhattacharyaKSW23, BehnezhadDHSS19}. 

In the case of dynamic clustering, the most practical and studied setting is one where the metric space $(V,d)$ evolves over time via a sequence of \emph{updates} that consist of point insertions and deletions.
Dynamic clustering has received a lot of attention from both the theory and machine learning communities over the past decade
\cite{icml/LattanziV17, ChanGS18, nips/Cohen-AddadHPSS19,esa/HenzingerK20, BhattacharyaLP22, alenex/GoranciHLSS21, focs/BCLP24}.
This long and influential line of work ultimately aims to design dynamic clustering algorithms with optimal guarantees, mainly focusing on the following three metrics: (I) the \emph{approximation ratio} of the solution $S$ maintained by the algorithm, (II) the \emph{update time} of the algorithm, which is the time it takes for the algorithm to update its solution after a point is inserted or deleted, and (III) the \emph{recourse} of the solution, which is how many points are added or removed from $S$ after an update is performed.\footnote{In other words, if we let $S$ and $S'$ denote the solution maintained by the algorithm before and after an update, then we define the recourse of the update to be $|S \oplus S'|$, where $\oplus$ denotes symmetric difference.} These three metrics—approximation ratio, update time, and recourse—capture the essential qualities of a good practical dynamic clustering algorithm: solution quality, efficiency, and stability.

\textbf{The state-of-the-art for dynamic $k$-center:}
The classic algorithm of \cite{tcs/Gonzalez85} returns a $2$-approxmiation to the $k$-center problem in $O(nk)$ time in the static setting, where $n$ is the size of the metric space $V$. It is known to be NP-hard to obtain a $(2 - \epsilon)$-approximation for $\epsilon > 0$ and that any (non-trivial) approximation algorithm for $k$-center has running time $\Omega(nk)$ \cite{BateniEFHJMW23}. Thus, the best we can hope for in the dynamic setting is to maintain a $2$-approximation in $\tilde O(k)$ update time.\footnote{The $\tilde O(\cdot)$ notation hides polylogarithmic factors in $n$ and the \emph{aspect ratio} of the metric space $\Delta$ (see \Cref{sec:prel}).}
Bateni et al~\cite{BateniEFHJMW23} showed how to maintain a $(2 + \epsilon)$-approximation to $k$-center in $\tilde O(k/\epsilon)$ update time, obtaining near-optimal approximation and update time. However, their algorithm does not have any non-trivial bound on the recourse (the solution might change completely between updates, leading to a recourse of $\Omega(k)$). In subsequent work, Lacki et al.~\cite{LHRJ23} showed how to maintain a $O(1)$-approximation with $O(1)$ recourse, achieving a worst-case recourse of $4$. The algorithm of \cite{LHRJ23} does however have a large $\tilde O(\poly(n))$ update time, with the authors stating that it would be interesting to obtain an update time of $\tilde O(\poly(k))$. This was very recently improved upon by \cite{SkarlatosF25}, who showed how to obtain an optimal worst-case recourse of $2$.\footnote{We note that \cite{LHRJ23, SkarlatosF25} define the recourse of an update as the number of centers that are \emph{swapped} assuming that the size of the solution is always $k$. Thus, the bounds in their papers are smaller by a factor of $2$.}
Very recently, Bhattacharya et al.~\cite{focs/BCLP24} showed how to maintain a $O(\log n \log k)$-approximation with $\tilde O(k)$ update time and $\tilde O(1)$ recourse, obtaining near-optimal update time and recourse simultaneously, but with polylogarithmic approximation ratio.

Ultimately, \emph{we want to obtain near-optimal approximation, update time and recourse simultaneously}. Each of the algorithms described above falls short in one of these metrics, having either poor approximation, update time or recourse. This leads us to the following natural question:
\begin{wrapper}
\begin{center}
\textbf{Q: Can we design a dynamic $k$-center algorithm with $O(1)$-approximation, $\tilde O(k)$ update time and $O(1)$ recourse?}
\end{center}
\end{wrapper}

\textbf{Our contribution:} We give an algorithm for dynamic $k$-center that answers this question in the affirmative, obtaining the following result.

\begin{theorem}[informal]\label{thm:informal}
    There is an algorithm for dynamic $k$-center that maintains a $20$-approximation with $O(k \log^5(n) \log \Delta)$ update time and $O(1)$ recourse.
\end{theorem}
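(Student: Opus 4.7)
The plan is to prove \Cref{thm:informal} by running a recourse-aware variant of the Bateni et al.~[SODA'23] dynamic $k$-center algorithm on top of the dynamic sparsifier of Bhattacharya et al.~[NeurIPS'23]. The overall structure is a two-level reduction: the sparsifier produces a slowly-evolving summary $\hat V$ of size $\tilde O(k)$ of the metric space $(V,d)$, and the $k$-center subroutine maintains a near-optimal solution with respect to $\hat V$ on the fly. Any such solution, being a subset of $\hat V \subseteq V$, is immediately a feasible solution for $V$.

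First I would invoke the dynamic sparsifier, whose guarantees I expect to be (i) $|\hat V|=\tilde O(k)$ at all times, (ii) the optimal $k$-center cost on $\hat V$ is an $O(1)$-approximation to the optimal cost on $V$, and (iii) each update to $V$ triggers only $\tilde O(1)$ amortized changes to $\hat V$. Property (i) keeps the subroutine fast; property (ii) allows any solution on $\hat V$ to be lifted to a solution on $V$ with only a constant-factor loss; and property (iii) is what ultimately makes the overall recourse small. Property (ii) also means that if the outer $k$-center routine is an $\alpha$-approximation on $\hat V$, then it yields an $O(\alpha)$-approximation on $V$, and the two multiplicative constants can be tuned so that their product is at most $20$.

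Next I would run a variant of the Bateni et al.~$(2+\epsilon)$-approximation on $\hat V$. Since $|\hat V|=\tilde O(k)$, each insertion or deletion into $\hat V$ can be handled in $\tilde O(k)$ time, and by property (iii) this gives the claimed $O(k\log^5(n)\log\Delta)$ amortized update time per update to $V$. The main obstacle is recourse: the vanilla Bateni et al.~algorithm provides no non-trivial recourse guarantee and may swap $\Omega(k)$ centers between consecutive updates. To control this, I would modify it with an epoch-based scheme in the spirit of \cite{LHRJ23} and \cite{focs/BCLP24}: maintain a single dyadic estimate $r$ of the optimum, rebuild the solution from scratch only when $r$ changes by a constant factor, and between rebuilds handle each update to $\hat V$ by a local $O(1)$-sized modification (adding a center when a new point is far from the current solution, and lazily removing centers whose clusters become empty). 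A standard amortization shows that between two consecutive rebuilds there must be $\Omega(k)$ updates to $\hat V$, so the $O(k)$ recourse of a rebuild amortizes to $O(1)$ per update to $\hat V$.

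Combining the three ingredients completes the proof: the approximation ratio is the product of the sparsifier's distortion and the subroutine's approximation, chosen to be at most $20$; the update time is the sparsifier's update time plus $\tilde O(1)\cdot \tilde O(k)$ for the subroutine; and the amortized recourse is $\tilde O(1)\cdot O(1)=O(1)$, where the first factor is the sparsifier's recourse and the second is the recourse of the modified Bateni et al.~algorithm on $\hat V$. The hard part will be pinning down the variant of Bateni et al.~that simultaneously inherits the $(2+\epsilon)$-approximation, retains the $\tilde O(k)$ per-update complexity, and admits the epoch argument above; the rest is bookkeeping and composition of the two layers.
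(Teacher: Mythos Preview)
Your two-layer architecture---run the sparsifier of \cite{ourneurips2023} to shrink the instance to size $\tilde O(k)$, then run a low-recourse $k$-center routine on the sparsified set---is exactly the paper's framework, and your accounting for update time and for composing approximation ratios is essentially right (modulo the detail that the composition in the paper is additive, $\alpha_S+2\alpha_A$, not multiplicative; this is how $4+2\cdot 8=20$ arises). The real content of the theorem, however, is the inner routine, and there your proposal has a genuine gap.

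The epoch scheme you sketch does not work for $k$-center as stated. The claim ``between two consecutive rebuilds there must be $\Omega(k)$ updates'' hinges on the optimum being stable, but $\Opt_k$ can jump by an arbitrary factor after a single insertion (a faraway point) or deletion (a center), so your dyadic estimate $r$ may change by a constant factor after one update and force a full rebuild. Your local rule ``add a center when a new point is far'' also silently lets the solution exceed $k$ centers; fixing this is precisely the hard part of \cite{LHRJ23}, whose argument is far from ``standard'' and moreover has $\poly(n)$ update time, which on $\hat V$ gives $\poly(k)$, not $\tilde O(k)$. The paper sidesteps all of this with a different idea: it maintains a \emph{nested} hierarchy of maximal independent sets $\mathcal I_\tau\subseteq\cdots\subseteq\mathcal I_1\subseteq\mathcal I_0=V$, where $\mathcal I_i$ is an MIS in the threshold graph $G_{\lambda_i}$ restricted to $\mathcal I_{i-1}$, maintained via the $O(1)$-expected-recourse dynamic MIS of \cite{BehnezhadDHSS19}. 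Because the levels are nested, the output is defined as the first $k$ points in the fixed order $\mathcal I_\tau,\mathcal I_{\tau-1}\setminus\mathcal I_\tau,\ldots$, and a change in the critical index $i^\star$ costs only $O(1)$ recourse rather than $\Omega(k)$. This nesting is the key trick you are missing; without it (or an equally concrete substitute) the recourse bound does not go through.
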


We emphasise that our algorithm is \emph{significantly simpler} than the state-of-the-art dynamic $k$-center algorithms described above. Our starting point is the algorithm of \cite{BateniEFHJMW23}. Even though the high-level framework used by \cite{BateniEFHJMW23} is quite simple, they require a significant amount of technical work to obtain good update time; we bypass this by using the dynamic sparsifier of \cite{ourneurips2023}, giving a very simple variant of their algorithm that has good recourse and can be implemented efficiently using the dynamic MIS algorithm of \cite{BehnezhadDHSS19} as a black box.\footnote{
For ease of exposition, we avoid discussing certain details of the algorithms in the introduction, such as oblivious vs.~adaptive adversaries, amortized vs.~worst-case guarantees, and so on.
\Cref{tab:kcenter} in \Cref{app:centre summary} contains a more comprehensive summary of the state-of-the-art algorithms for fully dynamic $k$-center in general metric spaces.}

\textbf{Related work:} It was first shown how to maintain a $(2 + \epsilon)$-approximation to $k$-center with $\tilde O(k^2)$ update time by \cite{ChanGS18}; the algorithm of \cite{CGHS22} improved the space efficiency of this algorithm, but at the cost of obtaining a $(4 + \epsilon)$-approximation.
Dynamic $k$-clustering has also been explored in various other settings; there are lines of work that consider the specific settings of Euclidean spaces \cite{BateniEFHJMW23, BhattacharyaGJQZ2024}, other $k$-clustering objectives such as $k$-median, $k$-means and facility location \cite{nips/Cohen-AddadHPSS19, esa/HenzingerK20, ourneurips2023, focs/BCLP24, BhattacharyaLP22,BCF24}, additional constraints such as \emph{outliers} \cite{nips/BiabaniHM023} and the incremental (insertion only) setting, where the objective is to minimize recourse while ignoring update time \cite{icml/LattanziV17,soda/FichtenbergerLN21}.
A separate line of work considers metric spaces of \emph{bounded doubling dimension}, where it is known how to maintain a $(2 + \epsilon)$-approximation to $k$-center in $\tilde O(1)$ update time \cite{alenex/GoranciHLSS21,GM24,PPP25}.
\cite{CFGNS24} also considered the problem of dynamic $k$-center w.r.t.~the shortest path metric on a graph undergoing edge insertions and deletions.

\subsection{Technical Overview}

We obtain our result by combining a simple variant of the algorithm of \cite{BateniEFHJMW23} with the dynamic sparsification algorithm of \cite{ourneurips2023}. In this technical overview, we give an informal description of a variant of the algorithm of \cite{BateniEFHJMW23}, and then explain how it can be modified to obtain our result.

\textbf{The algorithm of \cite{BateniEFHJMW23}:}
The dynamic algorithm of \cite{BateniEFHJMW23} works by leveraging the well-known reduction from $k$-center to \emph{maximal independent set} (MIS) using \emph{threshold graphs} (see \Cref{def:threshold-Graph}) \cite{jacm/HochbaumS86}.

The algorithm maintains a collection of $\tilde O(1)$ many threshold graphs $\{G_{\lambda_i}\}_i$ of $(V,d)$ and MISs $\{\mathcal I_i\}_i$ of these threshold graphs for values $\lambda_i$ that increase in powers of $(1 + \epsilon)$, i.e.~$\lambda_i = (1 + \epsilon) \cdot \lambda_{i-1}$. The MISs $\{\mathcal I_i\}_i$ satisfy the following property: \emph{Let $i^\star$ be the smallest index such that $|\I_{i^\star}| \leq k$, then $\mathcal I_{i^\star}$ is a $(2 + O(\epsilon))$-approximation to the $k$-center problem on $(V,d)$.} At any point in time, the output of this algorithm is the MIS $\mathcal I_{i^\star}$.

\textbf{Implementation:} Using the dynamic MIS algorithm of Behnezhad et al.~\cite{BehnezhadDHSS19} as a black box, which can maintain an MIS under vertex insertions and deletions in $\tilde O(n)$ update time, we can implement this algorithm with $\tilde O(n)$ update time. We note that this algorithm is a simple variant of the actual algorithm presented in \cite{BateniEFHJMW23}, which is much more technical. Since they want an update time of $\tilde O(k)$, it is not sufficient to use the algorithm of \cite{BehnezhadDHSS19} as a black box and instead need to use the internal data structures and analysis intricately.
We later improve the update time to $\tilde O(k)$ using a completely different approach.

\textbf{Obtaining good recourse:}
It's known how to maintain a \emph{stable} MIS of a dynamic graph, so that vertex updates in the graph lead to few changes in the MIS \cite{BehnezhadDHSS19}. However,
the main challenge that we encounter while trying to obtain good recourse is that the MISs $\{\mathcal I_i\}_i$ do not necessarily have any relation to each other. Thus, whenever the index $i^\star$ changes, the recourse could be as bad as $\Omega(k)$. By modifying the algorithm, we can ensure that the MISs $\{\mathcal I_i\}_i$ are \emph{nested}, and thus switching between them does not lead to high recourse. This leads to the following theorem.

\begin{theorem}\label{thm:result1}
    There is an algorithm for dynamic $k$-center against oblivious adversaries that maintains an $8$-approximation with $O(n \log^4 (n) \log \Delta)$ expected worst-case update time and $4$ expected worst-case recourse.
\end{theorem}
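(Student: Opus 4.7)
The plan is to implement the classical reduction from $k$-center to maximal independent set on threshold graphs, following the framework of \cite{BateniEFHJMW23, jacm/HochbaumS86}, and to enforce a nested structure on the maintained MISs in order to control recourse. Set $L = \Theta(\log \Delta)$ and, for $i = 0, 1, \ldots, L$, let $\lambda_i = 2^i \cdot d_{\min}$ and let $G_i$ denote the threshold graph on $V$ with edges $\{(u,v) : d(u,v) \leq \lambda_i\}$. I will maintain an MIS $\I_i$ of each $G_i$ and output $S := \I_{i^*}$, where $i^* := \min\{i : |\I_i| \leq k\}$. A standard ball-packing argument shows that $\lambda_{i^*-1} < 2\,\Opt$, and maximality of $\I_{i^*}$ in $G_{i^*}$ gives $\cl(\I_{i^*}, V) \leq \lambda_{i^*} < 4\,\Opt$; the further factor of two yielding the $8$-approximation claimed will be absorbed by the nested construction described next (and by the approximation of $d_{\min}$ by a power of two).

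To control recourse, I enforce that the family is \emph{nested}: $\I_0 \supseteq \I_1 \supseteq \cdots \supseteq \I_L$. I assign i.i.d.\ uniform priorities $\pi(v) \in [0,1]$ to the vertices and construct the chain top-down: $\I_L$ is the priority-greedy MIS of $G_L$, and for each $i < L$, $\I_i$ is obtained by initializing with $\I_{i+1}$ and then greedily extending to a maximal independent set of $G_i$ by scanning non-members in increasing priority order. Since $G_i$ has fewer edges than $G_{i+1}$, the set $\I_{i+1}$ remains independent in $G_i$, so the extension is well-defined and each $\I_i$ is still a (specific) MIS of $G_i$. Equivalently, every vertex $v$ is assigned a \emph{level} $\ell(v) \in \{-1, 0, \ldots, L\}$ and $\I_i = \{v : \ell(v) \geq i\}$, so the entire hierarchy is encoded by one number per vertex.

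A single metric-space update triggers one vertex insertion or deletion in each of the $L+1$ threshold graphs. I instantiate the dynamic MIS algorithm of Behnezhad et al.~\cite{BehnezhadDHSS19} on each $G_i$, adapted to respect the nested structure by requiring the MIS at level $i$ to extend the one at level $i+1$ (equivalently, updating levels directly). The original algorithm has $O(n \polylog n)$ expected worst-case update time per vertex update, and the adapted version preserves this bound up to polylogarithmic factors, giving the claimed total of $O(n \log^4(n) \log \Delta)$.

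The main obstacle will be the recourse analysis. When $i^*$ remains fixed, the output changes only through changes to $\I_{i^*}$, which is $O(1)$ in expectation by \cite{BehnezhadDHSS19}. The delicate case is when $i^*$ shifts: by nestedness, the symmetric difference between the old and new outputs equals the gap between their sizes, which a priori could be $\Theta(k)$. I expect to resolve this by arguing that, in expectation over the random priorities, each metric update causes only $O(1)$ total level changes across the entire hierarchy, and an $i^*$-shift can occur only when one of these level changes pushes a single $|\I_j|$ across the threshold $k$. The corresponding output modification consists precisely of the vertices whose levels crossed the current threshold, so the expected output symmetric difference per update is bounded by the expected number of level changes. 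Combining the changes inside a fixed level with the shifts of $i^*$, and tracking constants through a case analysis of insertion and deletion scenarios, yields the claimed expected worst-case recourse of $4$.
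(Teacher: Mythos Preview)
Your proposal has the right overall shape, but there is a genuine gap in the recourse argument, and the nesting mechanism differs from the paper's in a way that matters.

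\textbf{The recourse gap.} You output $S = \I_{i^*}$ and acknowledge that when $i^*$ shifts the symmetric difference could be $\Theta(k)$. Your proposed fix---``the corresponding output modification consists precisely of the vertices whose levels crossed the current threshold''---is incorrect. Suppose before the update $|\I_j| = k$, $|\I_{j+1}| = 1$, and $i^* = j$. A single insertion lands at level exactly $j$; now $|\I_j| = k+1$ and $i^*$ becomes $j+1$. Only one level change occurred, yet the output jumps from a set of size $k$ to a set of size $1$, incurring recourse $k-1$. The vertices in $\I_j \setminus \I_{j+1}$ did not change level at all; the \emph{threshold} moved past them. So bounding level changes does not bound output recourse in your scheme. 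The paper avoids this entirely by a different output rule: it returns $\I_{i^*}$ \emph{together with} the lexicographically first $k - |\I_{i^*}|$ points of $\I_{i^*-1}\setminus\I_{i^*}$, so that $|S|=k$ always and $S$ is literally ``the first $k$ points'' in the nested ordering $\I_\tau,\ \I_{\tau-1}\setminus\I_\tau,\ \ldots$. With this padding, an $i^*$-shift costs nothing extra: the recourse is bounded by $2|\delta_t(\I_j)| + 2|\delta_t(\I_{j-1})|$ where $j$ is the old $i^*$, and each term has expectation $\leq 1$.

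\textbf{The nesting mechanism.} The paper does not maintain $\I_i$ as an MIS of the full $G_i$ with a priority-based top-down extension. Instead it sets $\I_0 = V$ and, for each $i$, runs \texttt{DynamicMIS} as a \emph{black box} on the induced subgraph $G_i[\I_{i-1}]$. This automatically gives $\I_i \subseteq \I_{i-1}$, and every change in $\I_{i-1}$ is simply a vertex update fed to the next black box; the expected-recourse-$1$ guarantee then propagates by induction, yielding $\mathbb{E}[|\delta_t(\I_i)|] \leq 1$ for every $i$. Your top-down ``adapt \texttt{DynamicMIS} to respect the seed set $\I_{i+1}$'' is not the standard problem the cited algorithm solves, and you have not argued why its recourse and time bounds survive the modification. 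The paper's construction sidesteps this entirely. (Incidentally, because in the paper $\I_i$ is only maximal inside $\I_{i-1}$, the approximation becomes $\cl(\I_i) \leq 2\lambda_i$ rather than $\lambda_i$, which is where the factor $8$ actually comes from; in your construction $\I_i$ is maximal in all of $G_i$, so you would get $4$, and your remark about ``absorbing'' an extra factor of two is misplaced.)
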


To improve the update time from $\tilde O(n)$ to $\tilde O(k)$, we use the dynamic sparsifier of \cite{ourneurips2023}. This allows us to assume that the underlying metric space has size $\tilde O(k)$, thus leading to an update time of $\tilde O(k)$.
This leads to the following theorem.

\begin{theorem}\label{thm:result2}
    There is an algorithm for dynamic $k$-center against oblivious adversaries that maintains a $20$-approximation with $ O(k \log^5 (n) \log \Delta)$ expected amortized update time and $O(1)$ expected amortized recourse.\footnote{Here, the approximation guarantee holds w.h.p.}
\end{theorem}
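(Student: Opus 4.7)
The plan is to build the algorithm of \Cref{thm:result2} by running the algorithm of \Cref{thm:result1} on top of the dynamic $k$-center sparsifier of \cite{ourneurips2023}, treated as a black box. At a high level, the sparsifier maintains a subset $V' \subseteq V$ of size $\tilde O(k)$ such that an $\alpha$-approximate solution for $k$-center on $(V',d)$ lifts to a $c\cdot\alpha$-approximate solution for $k$-center on $(V,d)$, for some absolute constant $c$. Choosing the constants so that $c \cdot 8 = 20$ (i.e.\ a $5/2$-factor sparsification loss composed with the $8$-approximation of \Cref{thm:result1}) yields the claimed approximation ratio.

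More concretely, first I would instantiate the sparsifier on the input sequence of insertions and deletions to $V$. I would need the sparsifier to provide: (i) $|V'| = \tilde O(k)$ at all times; (ii) $O(1)$ amortized insertions and deletions to $V'$ per input update to $V$; (iii) $\tilde O(k)$ amortized update time; and (iv) the lifting guarantee above, w.h.p., against oblivious adversaries. Assuming these properties (which is what \cite{ourneurips2023} delivers), the sparsifier's output is itself a fully dynamic metric $(V',d)$ of size $\tilde O(k)$ to which the next layer can be applied.

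Second, I would feed the update stream of $V'$ directly into the algorithm of \Cref{thm:result1} to maintain an $8$-approximate solution $S \subseteq V'$ for $k$-center on $(V',d)$. By the lifting property of the sparsifier, $S$ is a $20$-approximate solution for $k$-center on $(V,d)$. For update time: every input update to $V$ triggers $O(1)$ amortized updates to $V'$, and each of these is processed by the \Cref{thm:result1} algorithm in expected worst-case time $O(|V'|\log^4(|V'|)\log\Delta) = O(k\log^4(n)\log\Delta)$; combined with the $\tilde O(k)$ overhead of the sparsifier itself, the total amortized update time is $O(k\log^5(n)\log\Delta)$. For recourse: each input update to $V$ causes $O(1)$ amortized updates to $V'$, and each of those causes $4$ expected worst-case changes to $S$, so the expected amortized recourse is $O(1)$.

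The main obstacle will be verifying that the interface between the two layers actually satisfies the oblivious-adversary assumption required by \Cref{thm:result1}: the update sequence seen by that algorithm is produced by the sparsifier, which is itself randomized, and one must check that its decisions are independent of the inner randomness used by the \Cref{thm:result1} algorithm so that the ``oblivious'' hypothesis is preserved in composition. A second, more bookkeeping-style obstacle is to carry the worst-case recourse and update-time guarantees of \Cref{thm:result1} through the amortized analysis of the sparsifier; this is standard but must be done carefully to obtain the final amortized bounds without hidden logarithmic losses beyond the single extra $\log n$ factor appearing in the update time.
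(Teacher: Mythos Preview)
Your proposal is correct and takes essentially the same route as the paper: run the sparsifier of \cite{ourneurips2023} to maintain $V'\subseteq V$ of size $\tilde O(k)$, feed its update stream into the algorithm of \Cref{thm:result1}, and combine the bounds. One clarification on the approximation accounting: the lift is not a tunable multiplicative factor~$c$. The paper's composition lemma (\Cref{thm:black box sparse}) shows that an $(\alpha_S,\beta)$-sparsifier composed with an $\alpha_A$-approximate inner algorithm gives an $(\alpha_S+2\alpha_A)$-approximation (via $\cl(S,V)\le \cl(V',V)+\cl(S,V')$ and $\Opt_k(V')\le 2\,\Opt_k(V)$); the sparsifier is shown to have $\alpha_S=4$, so with $\alpha_A=8$ one gets $4+2\cdot 8=20$. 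The ``$5/2$'' you back-compute is just an artifact of these particular constants, and the extra $\log n$ in the update time comes from $|V'|=\Theta(k\log(n/k))$ rather than $\Theta(k)$, not from the sparsifier overhead.
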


\begin{remark}
    In \Cref{{app:better rec}}, we design a different sparsifier by building on top of the sparsifier of \cite{ourneurips2023}, allowing us to obtain a recourse of at most $8 + \epsilon$.
\end{remark}

\subsection{Preliminaries and Notations}\label{sec:prel}

We now provide definitions and notations that we use throughout the paper.

\begin{definition}[Metric Space]
    Consider a set of points $V$ and a distance function $d: V \times V \rightarrow \mathbb{R}^{\geq 0}$ such that 
    $d(x,x)=0$, $d(x,y) = d(y,x)$ and $d(x,y) \leq d(x,z) + d(z,y)$ for all $x,y,z \in V$.
    We refer to the pair $(V,d)$ as a metric space.
\end{definition}

The {\em aspect ratio} of $(V,d)$ is defined as the ratio between the maximum and minimum non-zero distance of any two points in the space.
For each $S \subseteq V$ and $x \in V$, we denote the distance from $x$ to $S$ by $d(x,S) := \min_{y \in S} d(x,y)$.
The {\em ball} around $x \in V$ of radius $r$ is defined as $B(x,r) := \{ y \in V \mid d(x,y) \leq r \}$.

\textbf{Metric $k$-center:}
In the metric $k$-center problem, we are given a metric space $(V,d)$ and the objective is to find $S \subseteq V$ of size at most $k$ that minimizes $\cl(S,V) := \max_{x \in V} d(x,S)$.  
We denote the cost of the optimum solution to the $k$-center problem by $$\Opt_k(V) := \min_{S \subseteq V, |S| \leq k} \cl(S,V).$$
We sometimes abbreviate $\cl(S, V)$ by $\cl(S)$ and $\Opt_k(V)$ by $\Opt_k$ when the set $V$ is clear from the context.

\begin{definition}[$\lambda$-Threshold Graph]\label{def:threshold-Graph}
    Given a metric space $(V,d)$ and any $\lambda \geq 0$, we define the $\lambda$-\emph{threshold graph} of $(V,d)$ to be the graph $G_\lambda := (V, E_{\lambda})$ where $E_\lambda := \{(x,y) \in {V \choose 2} \mid d(x,y) \leq \lambda\}$.
\end{definition}

\begin{definition}[Bicriteria Approximation]
    Given a $\rho$-metric space $(V,d)$, we say that a subset of points $U \subseteq V$ is an $(\alpha, \beta)$-\emph{approximation} to the $(k,p)$-clustering problem if $\cl_p(U) \leq \alpha \cdot \Opt_k(V)$ and $|U| \leq \beta k$.
\end{definition}

\begin{definition}[$(\alpha, \beta)$-Sparsifier]\label{def:sparsifier}
Fix a $\rho \geq 1$ and a clustering problem.
An \emph{$(\alpha, \beta)$-sparsifier} is a dynamic algorithm that, given a $\rho$-metric space $(V,d)$ undergoing point insertions and deletions, maintains $U \subseteq V$ which is an $(\alpha, \beta)$-approximation for the clustering problem at any point in time.
The recourse of the sparsifier is defined as the number of insertions and deletions of the points in the maintained solution $U$.
\end{definition}

For any natural number $N$, we denote the set $\{1,2,\ldots, N\}$ by $[N]$.
Consider a sequence of updates $\sigma_1,\dots,\sigma_T$.
For each point set $U$ that is maintained explicitly by our algorithm, we use $U^{(t)}$ (for any $t \in [T]$) to indicate the status of this set after handling update $\sigma_t$.
For instance, if $S$ is the solution maintained by our algorithm, $S^{(t)}$ is the output after handling update $\sigma_t$.
We also define $\delta_t(U) := U^{(t-1)} \oplus U^{(t)}$ to indicate the symmetric difference of the maintained set $U$ before and after handling update $\sigma_t$.

\section{Our Dynamic Algorithm (\Cref{thm:result1})}\label{sec:ouralg}

In this section, we provide our dynamic $k$-center algorithm for \Cref{thm:result1}, which we call $\Dynk$. We begin by describing a dynamic algorithm for \emph{maximal independent set} (MIS) that our algorithm uses as a black box.

\subsection{The Algorithm $\DynMIS$} \cite{BehnezhadDHSS19} present an algorithm, which we refer to as $\texttt{DynamicMIS}$, that, given a dynamic graph $G$ undergoing updates via a sequence of {\em node insertions and deletions}, explicitly maintains an MIS $\mathcal I$ of $G$. The following lemma summarizes the key properties of $\texttt{DynamicMIS}$.

\begin{lemma}\label{lem:soheil}
    The algorithm $\textnormal{\texttt{DynamicMIS}}$ has an expected worst-case update time of $O(n \log^4 n)$ and an expected worst-case recourse of $1$.\footnote{The upper bound of $1$ on the expected recourse follows from Theorem 1 in \cite{podc/Censor-HillelHK16}.}
\end{lemma}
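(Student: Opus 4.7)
This lemma is not something we re-prove from scratch; it is a direct restatement of the main result of \cite{BehnezhadDHSS19}, with the recourse bound sharpened via the observation from \cite{podc/Censor-HillelHK16} noted in the footnote. My plan is therefore to recall the relevant construction and flag the two ingredients that together yield the claim, rather than reconstruct the full analysis.

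The starting point is the \emph{random greedy MIS}. Sample a uniformly random permutation $\pi$ of the current vertex set and define $\mathcal I$ by scanning vertices in the order of $\pi$, adding $v$ to $\mathcal I$ whenever none of its $\pi$-earlier neighbors is already in $\mathcal I$. This set is an MIS and, crucially, whether $v \in \mathcal I$ depends only on the ranks of $v$ and the vertices reachable via a certain recursively-defined dependency structure. When a vertex is inserted or deleted, the algorithm re-samples only the rank of that single vertex (so the marginal distribution remains a uniformly random permutation of the new vertex set) and recomputes the status of any vertex whose dependency structure changed.

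For the expected recourse bound of $1$, I would appeal directly to Theorem~1 of \cite{podc/Censor-HillelHK16}: under the random greedy MIS with a uniformly random permutation, inserting or deleting one vertex changes the MIS by at most one vertex in expectation. Since $\texttt{DynamicMIS}$ exactly maintains such a random greedy MIS, this yields the desired expected worst-case recourse of $1$.

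For the update-time bound of $O(n \log^4 n)$, the heart of the argument in \cite{BehnezhadDHSS19} is a hierarchical data structure over the ranks of $\pi$, together with a sampling argument showing that the expected number of vertices whose MIS status must be re-evaluated after a single update is bounded, and that each such evaluation can be done by querying a small number of neighbors. The main obstacle in re-deriving this bound would be the delicate potential argument controlling work across levels of the hierarchy; however, since we use $\texttt{DynamicMIS}$ strictly as a black box, I would simply cite \cite{BehnezhadDHSS19} for this estimate and stop there.
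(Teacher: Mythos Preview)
Your proposal is correct and matches the paper's treatment: the paper does not prove this lemma at all but simply states it as a black-box consequence of \cite{BehnezhadDHSS19} (for the update time) and \cite{podc/Censor-HillelHK16} (for the recourse bound, via the footnote). If anything, your sketch of the random-greedy-MIS construction and the role of the uniform rank distribution is more detailed than what the paper provides.
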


\subsection{Our Algorithm: $\Dynk$}
Let $(V,d)$ be a dynamic metric space undergoing updates via a sequence of point insertions and deletions, and let $d_{\max}$ and $d_{\min}$ be upper and lower bounds on the maximum and minimum (non-zero) distance between any two points in the space at any time throughout the sequence of updates. Let $\lambda_i := 2^{i-2} \cdot d_{\min}$ and $G_i := G_{\lambda_i}$ for each $0 \leq i \leq \tau$, where $\tau := \log_{2} \Delta + 2$ and $\Delta$ is the aspect ratio (see \Cref{sec:prel}) of the underlying metric space. Note that the edge-sets of these threshold graphs are nested, i.e.,
$ E_1 \subseteq \cdots \subseteq E_{\tau}$. 
Let $\mathcal I_0$ denote the set $V$. Our algorithm $\Dynk$ maintains the following for each $i \in [\tau]$:
\begin{itemize}
    \item An MIS $\mathcal I_i := \DynMIS(G_{i}[\mathcal I_{i-1}])$ of $G_{i}[\mathcal I_{i-1}]$, where $G[S]$ denotes the subgraph of $G$ induced by the node-set $S$.
    \item The set $\I_{i-1} \setminus \I_i$ ordered lexicographically.
\end{itemize}
Let $i^\star \in [\tau]$ be the smallest index such that $|\I_{i^\star}| \leq k$. Then the \textbf{output} $S$ of our dynamic algorithm is the union of the set $\I_{i^\star}$ and the first $k - |\I_{i^\star}|$ points in the set $\I_{i^\star - 1} \setminus \I_{i^\star}$.

\subsection{Analysis of Our Algorithm}\label{sec:analysis-without-sparsifier}

We begin by bounding the approximation ratio of our algorithm and then proceed to analyze the recourse and update time.

\textbf{Approximation ratio:}
We now show that the set $\I_{i^\star}$ is an $8$ approximation to the $k$-center problem on $(V,d)$.
Since $S \supseteq \I_{i^\star}$, the approximation guarantee of our algorithm follows.
We begin with the following simple lemmas.

\begin{lemma}\label{lem:apx:1}
     For each $i \in [\tau]$, we have that $\cl(\I_i) \leq 2 \lambda_i$.
\end{lemma}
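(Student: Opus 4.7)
The plan is to prove the stronger pointwise statement by induction on $i \in [\tau]$: for every $x \in V$ there exists $y \in \I_i$ with $d(x,y) \leq 2\lambda_i$. Taking the maximum over $x$ then gives $\cl(\I_i) \leq 2\lambda_i$. The key structural facts I will use are (a) $\I_i$ is an MIS of $G_i[\I_{i-1}]$, so every vertex of $\I_{i-1}$ is either in $\I_i$ or has a $G_i$-neighbor in $\I_i$; (b) edges in $G_i$ correspond to pairs of points within distance $\lambda_i$; and (c) the geometric doubling relation $\lambda_i = 2\lambda_{i-1}$.

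For the base case $i = 1$, since $\I_0 = V$ we have $\I_1$ is an MIS of $G_1$ itself. By maximality, any $x \in V$ is either in $\I_1$ or has a $G_1$-neighbor in $\I_1$, so $d(x, \I_1) \leq \lambda_1 \leq 2\lambda_1$.

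For the inductive step, assume the claim for $i-1$. Given $x \in V$, pick $z \in \I_{i-1}$ with $d(x,z) \leq 2\lambda_{i-1} = \lambda_i$. Since $\I_i$ is an MIS of $G_i[\I_{i-1}]$, there exists $y \in \I_i$ that is either equal to $z$ or adjacent to $z$ in $G_i$; in either case $d(z,y) \leq \lambda_i$. Applying the triangle inequality,
\[
d(x,\I_i) \;\leq\; d(x,z) + d(z,y) \;\leq\; \lambda_i + \lambda_i \;=\; 2\lambda_i,
\]
completing the induction.

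There is no real obstacle here: the argument is a short telescoping induction, and the only thing to be careful about is invoking the MIS property on the \emph{induced} subgraph $G_i[\I_{i-1}]$ rather than on $G_i$ itself, which is exactly why the inductive hypothesis is stated in terms of $\I_{i-1}$ and why the doubling $\lambda_i = 2\lambda_{i-1}$ makes the two terms combine cleanly.
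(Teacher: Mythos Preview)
Your proof is correct and follows essentially the same approach as the paper: an induction on $i$ that uses the MIS property of $\I_i$ in $G_i[\I_{i-1}]$ together with the doubling relation $\lambda_i = 2\lambda_{i-1}$ and the triangle inequality. The only cosmetic difference is that the paper anchors the induction at $\cl(\I_0)=0$ (since $\I_0=V$) rather than treating $i=1$ separately, and the roles of the intermediate variable names are swapped.
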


\begin{proof}
    We prove this by induction on $i$. We first note that, since $\I_0 = V$, $\cl(\I_0) = 0$. Now, let $i \in [\tau]$ and $x \notin \I_i$, and assume that the lemma holds for $i-1$.
    Since $\cl(\I_{i-1}) \leq 2\lambda_{i-1}$, there is some $y \in \I_{i-1}$ such that $d(x,y) \leq 2\lambda_{i-1}$. If $y \in \I_i$, then $d(x,\I_i) \leq 2\lambda_{i-1} = \lambda_i$ and we are done. Otherwise, since $y \in \I_{i-1} \setminus \I_{i}$ and $\I_{i}$ is an MIS, there exists some $z \in \I_i$ such that $d(y,z) \leq \lambda_i$. Thus, we have that $d(x,\I_i) \leq d(x,z) \leq d(x,y) + d(y,z) \leq 2\lambda_{i-1} + \lambda_i = 2\lambda_i$.
\end{proof}

\begin{lemma}\label{lem:apx:2}
     For each $i \in [\tau]$ such that $|\I_i| > k$, we have that $\lambda_i \leq 2 \cdot \Opt_k$.
\end{lemma}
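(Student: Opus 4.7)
The plan is to prove Lemma \ref{lem:apx:2} by a standard pigeonhole argument against the optimal solution, using the fact that $\I_i$ is an independent set in $G_i[\I_{i-1}]$ and hence any two of its points lie at distance strictly greater than $\lambda_i$.

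First I would record the key structural fact that drives everything: since $\I_i$ is a maximal independent set of $G_i[\I_{i-1}]$, any two distinct points $x, y \in \I_i$ satisfy $d(x,y) > \lambda_i$ (otherwise the edge $(x,y)$ would be in $E_i$ and they could not both be in an independent set of $G_i[\I_{i-1}]$). This is the only property of $\I_i$ that the proof needs.

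Next I would assume for contradiction that $\lambda_i > 2 \cdot \Opt_k$ and let $S^\star \subseteq V$ be an optimal solution, so $|S^\star| \leq k$ and $\cl(S^\star) = \Opt_k$. Every point $x \in \I_i$ has some nearest center $\pi(x) \in S^\star$ with $d(x, \pi(x)) \leq \Opt_k$. Since $|\I_i| > k \geq |S^\star|$, the pigeonhole principle yields two distinct points $x, y \in \I_i$ with $\pi(x) = \pi(y) =: c$. By the triangle inequality, $d(x,y) \leq d(x,c) + d(c,y) \leq 2\Opt_k < \lambda_i$, contradicting the structural fact above. Hence $\lambda_i \leq 2 \cdot \Opt_k$, as required.

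I do not expect any real obstacle here: the lemma is essentially the classical ``MIS in a threshold graph certifies a lower bound on the optimal $k$-center cost'' observation, instantiated for the induced graph $G_i[\I_{i-1}]$. The only thing to be slightly careful about is invoking independence in the induced graph rather than in $G_i$ itself, but this is immediate because $\I_i \subseteq \I_{i-1}$, so any pair in $\I_i$ is a pair in $\I_{i-1}$, and hence an edge in $G_i$ between them is also an edge in $G_i[\I_{i-1}]$.
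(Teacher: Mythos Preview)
Your proof is correct and is essentially the same pigeonhole argument as the paper's: both use that distinct points of $\I_i$ are at pairwise distance greater than $\lambda_i$, and then pigeonhole against the at most $k$ optimal centers. The only cosmetic difference is that the paper phrases it directly via balls $B(y^\star,\lambda_i/2)$ around the optimal centers (concluding some point of $\I_i$ lies outside all of them, hence $\cl(S^\star)\geq \lambda_i/2$), whereas you phrase it by contradiction; the underlying reasoning is identical.
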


\begin{proof}
Let $S^\star$ denote an optimal solution to the $k$-center problem in $(V,d)$ and
let $B^\star := B(S^\star, \lambda_i/2) = \cup_{y^\star \in S^\star} B(y^\star, \lambda_i/2)$ (see \Cref{sec:prel} for the definition of the ball $B(x,r)$). Given any point $y^\star \in S^\star$, and any two distinct points $y,y' \in \mathcal I_i$, we can see that at most one of $y$ and $y'$ is contained in $B(y^\star, \lambda_i/2)$, otherwise $d(y,y') \leq d(y, y^\star) + d(y^\star, y') \leq \lambda_i$, contradicting the fact that $\I_i$ is an MIS since $(y, y') \in E_{\lambda_i}$.
Combining this with our assumption that $|\I_i| > k \geq |S^\star|$, it follows that $\mathcal I_i \setminus B(S^\star, \lambda_i/2) \neq \emptyset$, as otherwise by pigeonhole principle at least two different elements of $\I_i$ would be contained in the same $B(y^\star, \lambda_i/2)$ for some $y^\star \in S^\star$, which is in contradiction with the above explanation.
Hence $\cl(S^\star) \geq \lambda_i/2$. It follows that $\lambda_i \leq 2\cdot \Opt_k$.
\end{proof}

Applying Lemmas~\ref{lem:apx:1} and \ref{lem:apx:2} and noting that $|\I_{i^\star - 1}| > k$, we get
$\cl(\I_{i^\star}) \leq 2\lambda_{i^\star} = 4\lambda_{i^\star - 1} \leq 8 \cdot \Opt_k$.

\textbf{Recourse:}
We now proceed to bound the expected recourse of our algorithm. Suppose that our algorithm handles a sequence of updates $\sigma_1,\dots,\sigma_T$.
Consider $\I_i^{(t)}$, $\delta_t(\I_i)$, $S^{(t)}$ $\delta_t(S)$ for each $t \in [T]$ (recall the notation from \Cref{sec:prel}).

The following lemma shows that the expected recourse of each $\I_i$ is small.
\begin{lemma}\label{lem:rec bound}
    For each $t \in [T]$, $i \in [\tau]$, we have that $\mathbb E \! \left[ \left| \delta_t(\I_i) \right| \right] \leq 1$.
\end{lemma}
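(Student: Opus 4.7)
The plan is to prove the claim by induction on the level $i$, with the base case at $i = 0$. Since $\I_0 = V$ and each $\sigma_t$ is a single point insertion or deletion, $|\delta_t(\I_0)| = 1$ deterministically, so $\mathbb E[|\delta_t(\I_0)|] \leq 1$ trivially. For the inductive step I would exploit two facts: (i) the level-$i$ instance of $\DynMIS$ uses randomness independent of levels $0, \dots, i-1$, and (ii) by \Cref{lem:soheil}, for any fixed update to its input graph, its expected recourse is at most $1$.

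For the inductive step at $i \geq 1$, the key observation is that when processing $\sigma_t$ at level $i$, the dynamic graph $G_i[\I_{i-1}]$ presented to the level-$i$ $\DynMIS$ receives exactly $|\delta_t(\I_{i-1})|$ node-level updates: a point entering $\I_{i-1}$ corresponds to a node insertion into $G_i[\I_{i-1}]$, and a point leaving corresponds to a node deletion. No other updates are needed, since the pairwise distances in $V$ are fixed and $G_i[\I_{i-1}]$ is completely determined by its vertex set. Write the intermediate states of $\I_i$ produced while processing these $m := |\delta_t(\I_{i-1})|$ updates as $J_0, J_1, \dots, J_m$, so that $J_0 = \I_i^{(t-1)}$ and $J_m = \I_i^{(t)}$. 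By the triangle inequality for symmetric difference,
$$|\delta_t(\I_i)| = |J_0 \oplus J_m| \leq \sum_{j=1}^m |J_{j-1} \oplus J_j|.$$
Conditioning on $\delta_t(\I_{i-1})$ fixes the entire sequence of updates presented to the level-$i$ $\DynMIS$ while leaving the level-$i$ randomness intact. Applying \Cref{lem:soheil} to each term on the right and using linearity gives $\mathbb E[|\delta_t(\I_i)| \mid \delta_t(\I_{i-1})] \leq |\delta_t(\I_{i-1})|$. Taking expectations and invoking the inductive hypothesis yields $\mathbb E[|\delta_t(\I_i)|] \leq \mathbb E[|\delta_t(\I_{i-1})|] \leq 1$, completing the induction.

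The main subtlety I expect to navigate is the final expectation step: \Cref{lem:soheil} bounds the recourse of a \emph{single} node update, while we need to combine it over a \emph{random} number of updates driven by the symmetric difference one level below. This is handled cleanly by conditioning on $\delta_t(\I_{i-1})$ and using that the $\DynMIS$ instances at different levels draw their randomness independently, together with the observation that the overall symmetric difference $|\delta_t(\I_i)|$ is upper bounded by the sum of intermediate per-step symmetric differences, which is exactly the quantity to which \Cref{lem:soheil} directly applies.
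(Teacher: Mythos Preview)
Your proof is correct and follows essentially the same inductive strategy as the paper: induct on $i$, use $|\delta_t(\I_0)|=1$ as the base case, and at each level invoke \Cref{lem:soheil} to pass the bound $\mathbb E[|\delta_t(\I_i)|]\le \mathbb E[|\delta_t(\I_{i-1})|]$. Your write-up is in fact more careful than the paper's, which simply asserts $\mathbb E[|\delta_t(\I_i)|]\le |\delta_t(\I_{i-1})|$ and takes expectations; you make explicit the triangle inequality over intermediate MIS states and the independence of the level-$i$ randomness needed for the conditioning to be valid. One minor wording point: what you really condition on is all the randomness of levels $0,\dots,i-1$ (which fixes the \emph{entire} update stream fed to the level-$i$ $\DynMIS$, not just the portion during step $t$); conditioning only on $\delta_t(\I_{i-1})$ does not by itself determine the earlier history, though the argument goes through once phrased this way.
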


\begin{proof}
    Fix any $t \in [T]$. We prove this by induction on $i$. We first note that 
    $\mathbb E \! \left[ \left| \delta_t(\I_0) \right| \right] = \left| \delta_t(V) \right| = 1$.
    Now, let $i \in [\tau]$ and assume that the lemma holds for $i - 1$. Then $\mathbb E \! \left[ \left| \delta_t(\I_i) \right| \right]$ is the expected recourse of the solution maintained by the dynamic algorithm $\DynMIS(G_{i}[\I_{i-1}])$. By \Cref{lem:soheil}, each node update in $G_{i}[\I_{i-1}]$ leads to an expected recourse of at most $1$ in the solution maintained by this algorithm. Hence, we have that
    $\mathbb E \! \left[ \left| \delta_t(\I_i) \right| \right] \leq \left| \delta_t(\I_{i-1}) \right|$.
    Taking expectation on both sides, the lemma follows.
\end{proof}

\noindent
We now use \Cref{lem:rec bound} to bound the expected recourse of the solution $S$.
\begin{lemma}\label{lem:exp Delta S bound}
    For each $t \in [T]$, $\mathbb E[ |\delta_t(S)|] \leq 4 $.
\end{lemma}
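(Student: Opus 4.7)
The plan is to relate $|\delta_t(S)|$ to the recourse of a small number of the MISs $\I_i$ and then apply \Cref{lem:rec bound} in expectation. The structural observation is that $S = \I_{i^\star} \cup P$ with $\I_{i^\star} \cap P = \emptyset$, where $P$ is the lex-prefix of $\I_{i^\star - 1} \setminus \I_{i^\star}$ of length $k - |\I_{i^\star}|$, so $S$ depends only on the two adjacent levels $\I_{i^\star - 1}$ and $\I_{i^\star}$ (and the fixed lex order).

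First I would treat the case where $i^\star$ takes the same value, call it $j$, at times $t-1$ and $t$. Since $\I_j$ and $P$ are disjoint at each time, the definition of symmetric difference yields $|\delta_t(S)| \leq |\delta_t(\I_j)| + |\delta_t(P)|$. To bound $|\delta_t(P)|$, I would decompose the net change from time $t-1$ to $t$ into a sequence of elementary single-element insertions and deletions into $\I_{j-1}$ and $\I_j$, scheduled in an order that preserves $\I_j \subseteq \I_{j-1}$ throughout. A short case analysis on each elementary operation shows that one affecting only $\I_{j-1}$ changes $P$ by at most $2$ (the standard cost of inserting or deleting into the base of a lex prefix), whereas one involving $\I_j$ changes $P$ by at most $1$, because the induced shift in the base set $\I_{j-1} \setminus \I_j$ cancels with the simultaneous unit shift in the prefix length $\ell = k - |\I_j|$ at the prefix boundary. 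Summing gives $|\delta_t(P)| \leq 2|\delta_t(\I_{j-1})| + |\delta_t(\I_j)|$, hence $|\delta_t(S)| \leq 2|\delta_t(\I_{j-1})| + 2|\delta_t(\I_j)|$, and \Cref{lem:rec bound} applied to each term yields the expected bound of $4$.

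For the case where $i^\star$ differs at the two times, say $j_{t-1} \neq j_t$, I would use the triangle inequality
\[
|S^{(t-1)} \oplus S^{(t)}| \leq |S^{(t-1)} \oplus \I_j^{(t-1)}| + |\I_j^{(t-1)} \oplus \I_j^{(t)}| + |\I_j^{(t)} \oplus S^{(t)}|,
\]
choosing the intermediate level as $j = j_{t-1}$ when $j_t > j_{t-1}$ and $j = j_t$ otherwise. In each sub-case, the containments $\I_{i^\star}^{(s)} \subseteq S^{(s)} \subseteq \I_{i^\star - 1}^{(s)}$ together with $|S^{(s)}| = k$ force the two endpoint terms to equal either $k - |\I_j^{(\cdot)}|$ or $|\I_j^{(\cdot)}| - k$; the $k$'s cancel and, using $\bigl||\I_j^{(t)}| - |\I_j^{(t-1)}|\bigr| \leq |\delta_t(\I_j)|$, the whole sum collapses to $|\delta_t(S)| \leq 2|\delta_t(\I_j)|$.

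The main obstacle is to take expectations cleanly in this second case, since $j_t$ itself depends on the randomness at time $t$ whereas \Cref{lem:rec bound} gives the recourse bound at each fixed level. I would address this by conditioning on the algorithm's state at time $t-1$ (which determines $j_{t-1}$) and invoking the conditional form of \Cref{lem:rec bound}, so that the random choice of level is resolved before the per-level MIS recourse bound is applied.
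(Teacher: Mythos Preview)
Your deterministic inequalities are fine, but the expectation step has a real gap that your proposed conditioning does not close. In the sub-case $j_t<j_{t-1}$ you set the intermediate level to $j=j_t$. Conditioning on the state at time $t-1$ fixes $j_{t-1}$ but \emph{not} $j_t$: the value $j_t$ is a function of the randomness used during step $t$ and is correlated with the very quantities $|\delta_t(\I_i)|$ you want to bound. So even a conditional form of \Cref{lem:rec bound} does not give $\mathbb E[\,|\delta_t(\I_{j_t})|\mid\text{state at }t-1]\le 1$; the ``random choice of level'' is not resolved by your conditioning.

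The fix is small: in that sub-case use $j=j_{t-1}-1$ instead of $j=j_t$. Since $j_t<j_{t-1}$ we have $j_{t-1}-1\ge j_t$, hence $\I_{j_{t-1}-1}^{(t)}\subseteq \I_{j_t}^{(t)}\subseteq S^{(t)}$, while $S^{(t-1)}\subseteq \I_{j_{t-1}-1}^{(t-1)}$; your triangle-inequality computation then goes through verbatim and gives $|\delta_t(S)|\le 2|\delta_t(\I_{j_{t-1}-1})|$. Together with your other sub-case ($j=j_{t-1}$) and your Case~1 bound, every case now yields the single deterministic inequality
\[
|\delta_t(S)|\;\le\;2\bigl(|\delta_t(\I_{j_{t-1}})|+|\delta_t(\I_{j_{t-1}-1})|\bigr),
\]
with the level fixed already at time $t-1$. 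Taking expectations is then clean, and you get $4$.

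This is precisely the claim the paper proves, but the paper reaches it without any case split: it observes that $S$ is always the first $k$ points of the global ordering $\I_\tau,\,\I_{\tau-1}\setminus\I_\tau,\,\ldots,\,\I_0\setminus\I_1$ (each block in lex order), and that at time $t-1$ these first $k$ points are determined by the two blocks indexed by $j:=j_{t-1}$ and $j-1$; hence the number of deletions (and symmetrically insertions) among the first $k$ after the update is at most $|\delta_t(\I_j)|+|\delta_t(\I_{j-1})|$. Your elementary-operation analysis in Case~1 is a valid, more explicit substitute for that global-ordering argument, but the paper's route avoids having to treat the ``$i^\star$ changes'' case separately at all.
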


\begin{proof}
    Let $j$ denote the value of the index $i^\star$ immediately after handling the update $\sigma_{t-1}$.
    If $j=0$, we have $S^{(t-1)} = \I^{(t-1)}_0$ which equals the whole space at this time whose size is at most $k$. In this case, it is obvious that
    $|\delta_t(S)| = |S^{(t)} \oplus S^{(t-1)}| \leq 1$.
    If $j > 0$, the size of the whole space, after updates $t-1$ and $t$, is at least $k$.

    \begin{claim}
    We have
    $$ |S^{(t-1)} \oplus S^{(t)}| \leq 2|\delta_t(\I_j)| + 2|\delta_t(\I_{j-1})|. $$
    \end{claim}
    \begin{proof}
        Consider the following ordering on the points of $V$.
        $$ \I_\tau, \I_{\tau - 1} \setminus \I_\tau, \I_{\tau - 2} \setminus \I_{\tau-1}, \cdots, \I_0 \setminus \I_1, $$
        where for each $i \in [\tau]$, elements of  $\I_{i-1} \setminus \I_i$ are written in the lexicographic order.
        The output of the algorithm is exactly the first $k$ points in this order.
        We assumed that $S^{(t-1)}$ consists of  $\I^{(t-1)}_j$ and the first $k- |\I_j^{(t-1)}|$ elements of $\I_{j-1}^{(t-1)} \setminus \I_j^{(t-1)}$.
        As a result, after handling update $\sigma_t$, we have at most  $|\delta_t(\I_j)| + |\delta_t(\I_{j-1})|$ many changes in the first $k$ points in this order.
        Hence, there are at most these many deletions from $S^{(t-1)}$ as well as these many insertions, which concludes the claim.
    \end{proof}
   According to the above claim, by taking expectations and applying \Cref{lem:rec bound}, we have that $\mathbb E[ |\delta_t(S)|] \leq 4$.
\end{proof}

\textbf{Update time:}
Fix any $t \in [T]$ and $i \in [\tau]$.
If we delete $x$ from $\I_i$, then we remove the corresponding node from $G[\I_i]$.
By \Cref{lem:soheil}, the expected time to update the MIS $\I_{i+1}$ in this graph after the deletion of $x$ is $\tilde{O}(n)$.
If we insert a new point $x$ into $\I_i$, we insert a node $x$ into $G[\I_i]$.
To find the edges between $x$ and other nodes, we first find the distance of $x$ to all other points in $\I_i$ in $O(|\I_i|) = O(n)$ time and then compare the distance with $\lambda_i$.
Again, by \Cref{lem:soheil}, the expected time to update the MIS $\I_{i+1}$ in this graph after inserting the node $x$ is $O(n \log^4 n)$.

Hence, we can perform both insertions into and deletions from $\I_i$ in time $O(n \log^4 n)$ for each $i \in [\tau]$.
By \Cref{lem:rec bound}, the expected number of updates in $\I_i$ is at most $1$. Thus, we can update the graph $G[\I_i]$ and $\I_{i+1}$ in expected $O(n \log^4 n)$ time.
Finally, since $\tau = O(\log \Delta)$, the total time taken to update all of the graphs $G[\I_i]$ and MISs $\I_i$ with $O(n\log^4 (n) \log \Delta)$ in expectation.

In order to maintain $S$, for each $i$, we can maintain two binary search trees $\mathcal T_i^f $ and $\mathcal T_i^r$ such that $\mathcal T_i^f$ contains the first $k - |\I_i|$ elements of $\I_{i-1} \setminus \I_i$ and $\mathcal T_i^r$ contains the rest of elements in $\I_{i-1} \setminus \I_i$.
After each insertion or deletion in any of $\I_i$ or $\I_{i-1}$, we can exchange the last element of $\mathcal T_i^f$ and the first element of $\mathcal T_i^r$ appropriately, in order to maintain the property in the definition of $\mathcal T_i^f$ and $\mathcal T_i^r$.
For instance after a deletion in $\I_i$, the value of $k - |\I_i|$ increments, and we should remove the first element of $\mathcal T_i^r$ and add it to $\mathcal T_i^f$ (which would be new last element).
Hence, we have the first $k - |\I_i|$ elements of $\I_{i-1} \setminus \I_i$ stored in $\mathcal T_i^f$ explicitly, at any time.
Note that 
$$ \mathbb E[ |\delta_t(\I_{i-1} \setminus \I_i)|] \leq \mathbb E[ |\delta_t(\I_{i-1})|] + \mathbb E[ |\delta_t(\I_i)|] \leq 2, $$
where the last inequality follows from \Cref{lem:rec bound}.
This means that the update time for maintaining each of these trees is $O(\log n)$ in expectation.

In total, we can maintain all of the objects in our algorithm in $O(n \log^4 (n)  \log \Delta)$ update time in expectation.

\section{Improving the Update Time (\Cref{thm:result2})}\label{sec:sparse}

We now show how to use \emph{sparsification} to improve the update time of our algorithm to $\tilde{O}(k)$.
In \Cref{sec:bicriteria}, we show how to use a sparsifier (see \Cref{def:sparsifier}) as a black box in order to speed up a dynamic algorithm.
In \Cref{sec:combination}, we describe the guarantees of a sparsifier for $k$-center and combine it with $\Dynk$, proving \Cref{thm:result2}.
The rest of the \Cref{sec:sparse} is devoted to constructing this sparsifier for $k$-center, which follows from the previous work of \cite{ourneurips2023}. 
Since their algorithm is primarily designed for $k$-median rather than $k$-center, we describe the sparsifier of \cite{ourneurips2023} in \Cref{sec:sparsifier-full}, along with a new analysis that gives an improved bound on its approximation ratio for $k$-center and its recourse.

\subsection{Dynamic Sparsification}\label{sec:bicriteria}

The following theorem describes the properties of the dynamic algorithm obtained by \emph{composing} a sparsifier and any dynamic algorithm for $k$-center problem. 

\begin{theorem}\label{thm:black box sparse}
    Assume we have an $(\alpha_S, \beta)$-sparsifier for metric $k$-center with $T_S$ update time and $R_S$ recourse, and  a dynamic $\alpha_A$-approximation algorithm for metric $k$-center with $T_A(n)$ update time and $R_A(n)$ recourse. 
    Then we can obtain a dynamic algorithm for metric $k$-center with $(\alpha_S + 2\alpha_A)$-approximation ratio, $O(T_S + R_S \cdot T_A(\beta k))$ update time and $O(R_S \cdot R_A(\beta k))$ recourse.\footnote{This holds for both worst-case and amortized guarantees.}
\end{theorem}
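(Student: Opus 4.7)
The plan is to compose the two algorithms in the obvious way: run the sparsifier on the dynamic metric space $(V,d)$ to maintain $U \subseteq V$ of size at most $\beta k$, and then run the dynamic $k$-center algorithm on the dynamic metric space $(U, d|_U)$ (viewing the changes the sparsifier makes to $U$ as point insertions/deletions into this smaller metric space) to maintain a solution $S \subseteq U$ with $|S| \leq k$. The final output of the composite algorithm is simply $S$.

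For the approximation guarantee, first I would apply the triangle inequality to bound $\cl(S,V)$ in terms of $\cl(U,V)$ and $\cl(S,U)$: for every $x \in V$, if $u \in U$ is nearest to $x$, then $d(x,S) \leq d(x,u) + d(u,S) \leq \cl(U,V) + \cl(S,U)$, yielding
\[
    \cl(S,V) \;\leq\; \cl(U,V) + \cl(S,U) \;\leq\; \alpha_S \cdot \Opt_k(V) + \alpha_A \cdot \Opt_k(U).
\]
The main obstacle, and the key lemma, is to bound $\Opt_k(U)$ in terms of $\Opt_k(V)$ \emph{without} paying a multiplicative $(1+\alpha_S)$ factor (which would give the weaker bound $\alpha_S + \alpha_A + \alpha_S\alpha_A$). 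To do this I would use the standard "project optimum onto $U$" trick: let $S^\star$ be optimal for $V$, and for each $y \in S^\star$ let $\pi(y) \in U$ be a nearest point of $U$ to $y$. Set $S' := \{\pi(y) : y \in S^\star\} \subseteq U$, so $|S'| \leq k$. For any $u \in U$, letting $y_u$ be its nearest point in $S^\star$, we have $d(y_u, \pi(y_u)) \leq d(y_u, u)$ by definition of $\pi$ (since $u \in U$), and so
\[
    d(u, S') \;\leq\; d(u, \pi(y_u)) \;\leq\; d(u, y_u) + d(y_u, \pi(y_u)) \;\leq\; 2\,d(u, y_u) \;\leq\; 2\,\Opt_k(V).
\]
This gives $\Opt_k(U) \leq \cl(S',U) \leq 2\,\Opt_k(V)$, and plugging back in yields $\cl(S,V) \leq (\alpha_S + 2\alpha_A)\,\Opt_k(V)$, as claimed. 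Crucially, this step avoids using the (potentially loose) bound $\cl(U,V) \leq \alpha_S \Opt_k(V)$ inside the diameter expansion.

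For the update time and recourse, the accounting is essentially a charging argument. On each update to $V$, running the sparsifier costs $T_S$ time and changes $U$ by at most $R_S$ point insertions/deletions. Each such change to $U$ is then fed as an update into the dynamic $k$-center algorithm running on the metric space $(U, d|_U)$, which has size at most $\beta k$ at all times; by assumption this costs $T_A(\beta k)$ time per update and produces at most $R_A(\beta k)$ changes to $S$. Summing gives update time $O(T_S + R_S \cdot T_A(\beta k))$ and recourse $O(R_S \cdot R_A(\beta k))$ per update to $V$. The same argument works in the amortized regime by summing over the entire update sequence, since both $R_S$ and $R_A(\beta k)$ bounds can be interpreted amortized, giving the stated guarantees in both the worst-case and amortized settings.
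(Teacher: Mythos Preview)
Your proof is correct and follows essentially the same approach as the paper: the same composition, the same triangle-inequality bound $\cl(S,V) \leq \cl(U,V) + \cl(S,U)$, and the same key lemma $\Opt_k(U) \leq 2\,\Opt_k(V)$ proved by projecting an optimal solution for $V$ onto $U$ (which the paper states separately as \Cref{lem:OptU-compare-to-OptV}). The update-time and recourse accounting is also identical.
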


\begin{proof}
    
Let $(V,d)$ be a dynamic metric space.
We run the sparsifier on this space which at any point in time maintains a subset $U \subseteq V$.
This defines a dynamic metric subspace $(U,d)$ of size at most $\beta k$, such that each update in $V$ leads to at most $R_S$ updates in $U$ and $\cl(U,V) \leq \alpha_S \cdot \Opt_k(V)$.

Now, we feed the new dynamic subspace $(U, d)$ (whose size is at most $\beta k$ at any time) to the $k$-center algorithm, which maintains a subset of points $S \subseteq U$ of size $k$ such that $\cl(S,U) \leq \alpha_A \cdot \Opt_k(U)$. We refer to this process as \emph{composing} these dynamic algorithms.

Since each update in $V$ leads to at most $R_S$ updates in $U$, it follows that the update time and recourse of the composite algorithm are $O(T_S + R_S \cdot T_A(\beta k))$ and $O(R_S \cdot R_A(\beta k))$ respectively.
To bound the approximation ratio, consider the following claim.

\begin{claim}\label{cl:bicri}
    Given subsets $S \subseteq U \subseteq V$, we have that $\cl(S,V) \leq \cl(U,V) + \cl(S,U)$.
\end{claim}
\begin{proof}
    Let $x \in V$, $y$ and $y'$ be the closest points to $x$ in $S$ and $U$ respectively, and $y^\star$ be the closest point to $y'$ in $S$. Then we have that 
    \begin{align*}
      d(x, S) &= d(x,y) \leq d(x,y^\star) \leq d(x, y') + d(y', y^\star) \\
      &= d(x, U) + d(y',S) \leq \cl(U,V) + \cl(S,U).  
    \end{align*}
    It follows that $\cl(S,V) \leq \cl(U,V) + \cl(S,U)$.
\end{proof}
Applying \Cref{cl:bicri}, we get that
\begin{align*}
   \cl(S,V) &\leq \cl(U,V) + \cl(S, U) \\
   &\leq \alpha_S \cdot \Opt_k + \alpha_A \cdot \Opt_k(U) \\
   &\leq (\alpha_S + 2\alpha_A) \cdot \Opt_k(V),
\end{align*}
where the last inequality follows since $\Opt_k(W) \leq 2 \cdot \Opt_k(V)$ for any $W \subseteq V$ (see \Cref{lem:OptU-compare-to-OptV}).
\end{proof}

\subsection{Proof of \Cref{thm:result2}}\label{sec:combination}

We start with the following lemma, which provides the guarantees of a sparsifier for the $k$-center problem.
We prove this lemma in \Cref{sec:sparsifier-full}.

\begin{lemma}\label{cor:guarantee-of-sparsifier}
    There exists a $(4, O(\log(n/k)))$-sparsifier for the $k$-center problem on a metric space $(V,d)$, whose approximation guarantee holds with high probability, and has $O(k \log(n/k))$ amortized update time and $O(1)$ amortized recourse.\footnote{In \Cref{{app:better rec}}, we show how to construct a different sparsifier, leading to a recourse of at most $8 + \epsilon$.}
\end{lemma}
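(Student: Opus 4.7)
The plan is to invoke the dynamic sparsifier of \cite{ourneurips2023}, which was designed for $k$-median, and re-analyze it in the $k$-center setting. At a high level, such a sparsifier maintains a multi-scale collection of ``nets'' in the dynamic metric space: at each geometric scale $\lambda_j = 2^j$, it keeps a representative subset $U_j$ behaving like an MIS in the $\lambda_j$-threshold graph on a suitably sampled portion of $V$. The output $U$ of the sparsifier is obtained by combining these representatives across $O(\log(n/k))$ relevant scales, giving $|U| = O(k \log(n/k))$. My task reduces to showing three things: (i) $U$ covers $V$ at radius $4 \cdot \Opt_k(V)$ with high probability, (ii) the amortized recourse is $O(1)$, and (iii) the amortized update time is $O(k \log(n/k))$.

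For the approximation, I would identify the critical scale $j^\star$ such that $\lambda_{j^\star}$ is the smallest value in the geometric grid with $\lambda_{j^\star} \geq \Opt_k$, so that $\lambda_{j^\star} \leq 2 \cdot \Opt_k$. The core argument mirrors \Cref{lem:apx:1,lem:apx:2}: the MIS-like set at scale $\lambda_{j^\star}$ has size $O(k)$ (otherwise, by the pigeonhole argument of \Cref{lem:apx:2}, some pair of points in an optimal cluster would both be in the MIS, a contradiction), and every point outside it is within distance $\lambda_{j^\star}$ of some element of $U$. The extra factor of $2$ on top of $2 \cdot \Opt_k$ comes from the discretization of $\Opt_k$ to the geometric grid, yielding the final bound $\cl(U,V) \leq 2 \lambda_{j^\star} \leq 4 \cdot \Opt_k$. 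The high-probability claim comes from union-bounding over the $O(k)$ optimal clusters against the sampling rate of the sparsifier, which is calibrated so that every ``heavy'' optimal cluster contributes a representative to $U$ w.h.p.

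For the recourse and update time, I would exploit the \emph{lazy rebuild} structure of the sparsifier: each scale's representative set is rebuilt only every $\Theta(k \log(n/k))$ updates, costing $O(k \log(n/k))$ insertions/deletions into $U$ per rebuild, but amortizing to $O(1)$ recourse per update. Between rebuilds, new points are either added to the appropriate scale's buffer or, if they are far from existing representatives, absorbed directly into $U$. The amortized update time is similarly dominated by the rebuild cost of $O(k \log(n/k))$ per rebuild, divided by the same number of updates, plus the per-update cost of checking the constant number of scales affected.

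The main obstacle will be pinning down the constant $4$ (rather than a larger constant) in the approximation and showing that the amortized recourse is truly $O(1)$ and not, say, $O(\log(n/k))$. The delicacy is that the sparsifier's guarantees in \cite{ourneurips2023} are phrased in terms of $k$-median assignment costs, and translating these to the covering radius required for $k$-center requires (a) replacing the summation-based Markov arguments with a maximum/high-probability union bound over optimal clusters, and (b) carefully accounting for how the buffered/lazy updates between rebuilds interact with the covering guarantee — concretely, verifying that any point inserted after the most recent rebuild is still either directly in $U$ or within $\lambda_{j^\star}$ of an earlier representative that has survived. Once this approximation analysis is established, the recourse bound follows by a standard amortized charging of rebuild events to the updates that triggered them.
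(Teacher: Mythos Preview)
Your proposal rests on a mistaken picture of what the sparsifier of \cite{ourneurips2023} actually is. It is \emph{not} a distance-scale / threshold-graph / MIS hierarchy of the kind you sketch (that is essentially the structure of $\Dynk$ itself, not of the sparsifier feeding it). The sparsifier is a dynamization of the Mettu--Plaxton algorithm: it maintains a chain $V=U_1\supseteq U_2\supseteq\cdots\supseteq U_\ell$ where each $U_{i+1}$ is obtained from $U_i$ by the subroutine $\texttt{AlmostCover}$, which samples $2k$ points uniformly at random from $U_i$ and removes the $|U_i|/4$ points closest to the sample. The levels are indexed by \emph{remaining point-set size} (shrinking geometrically), not by distance scale; this is why there are $O(\log(n/k))$ levels rather than $O(\log\Delta)$. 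The output is $U=S_1\cup\cdots\cup S_{\ell-1}\cup U_\ell$, the union of the sampled sets plus the final layer.

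Consequently, none of the three components of your sketch matches the paper. For the approximation, the paper does not use the MIS/pigeonhole argument of \Cref{lem:apx:1,lem:apx:2}; instead it introduces the quantity $\mu_k^\beta(W)$ (the minimum $\mu$ for which $k$ disjoint subsets of $W$ with total mass $\beta|W|$ have diameter $\le\mu$), shows $\mu_k^1(V)\le 2\cdot\Opt_k(V)$, proves stability of $\mu$ under small symmetric difference, and argues that a single call to $\texttt{AlmostCover}(W)$ yields covering radius $r\le\mu_k^{1/2}(W)$ with constant probability (boosted to w.h.p.\ by $\Theta(\log n)$ repetitions). The factor $4$ is $2\cdot 2$: a diameter-to-radius loss, times $\mu_k^1\le 2\cdot\Opt_k$, with no geometric discretization involved. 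For the recourse, the paper's argument is a layer-by-layer charging scheme: reconstructing from layer $i$ costs $O(k\log(|U_i|/k))$ changes and is charged to $\Omega(|U_i|)$ updates, and the resulting sum $\sum_i (k/|U_i|)\log(|U_i|/k)$ is $O(1)$ because the $|U_i|$ decay geometrically. Your ``global rebuild every $\Theta(k\log(n/k))$ updates'' is not how the algorithm works and would not by itself explain why the approximation holds between rebuilds. In short, the high-level plan ``reuse \cite{ourneurips2023} and re-analyze for $k$-center'' is right, but the concrete algorithm and arguments you outline are for a different object; you would need to redo all three parts against the actual Mettu--Plaxton-style hierarchy.
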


\Cref{thm:result2} now follows by combining our algorithm $\Dynk$ from \Cref{thm:result1} and the sparsifier of \Cref{cor:guarantee-of-sparsifier} by using the composition described in \Cref{thm:black box sparse}.
The approximation ratio of the resulting algorithm is $\alpha_S + 2\alpha_A = 4 + 2\cdot 8 = 20$ w.h.p.
The expected amortized recourse is $O(R_S \cdot R_A(\beta k)) = O(1)$ since both $R_S$ and $ R_A$ are constant.
For the update time, we have $T_S = O(k \log (n/k))$, $R_S = 1$, $\beta = \log(n/k)$, and $T_A(\beta k) = O((\beta k) \log^4(\beta k) \log \Delta) = O(k \log^5(n) \log \Delta)$.
Hence, the final update time of the algorithm is $O(T_S + R_S \cdot T_A(\beta k)) = O(k \log^5 (n) \log \Delta)$.

\subsection{The Algorithm $\Sparsifier$ (Proof of \Cref{cor:guarantee-of-sparsifier})}\label{sec:sparsifier-full}

The algorithm in \Cref{cor:guarantee-of-sparsifier} (which we refer to as $\Sparsifier$) follows from the results of \cite{ourneurips2023}.
Before we provide the algorithm, we mention that in \cite{ourneurips2023}, the main goal is to solve the $k$-median problem.
In particular, there are some technical details for $k$-median that we do not need for $k$-center. 
There is no argument for recourse of the algorithm in \cite{ourneurips2023}.
Here, we briefly describe the algorithm and show that the amortized recourse of this algorithm is actually $O(1)$.

The algorithm $\textnormal{\texttt{Sparsifier}}$ is a dynamization of a well-known algorithm by Mettu and Plaxton \cite{MettuP02}.

\noindent \textbf{The algorithm of Mettu-Plaxton:}
The main component of this algorithm is the following key subroutine:
\begin{itemize}
     \vspace{-0.2cm}
    \item $\texttt{AlmostCover}(U)$: Sample a subset $S \subseteq U$ of size $2k$ u.a.r., compute the subset $C \subseteq U$ of the $|U|/4$ points in $U$ that are closest to the points in $S$, and return $(S, U \setminus C)$.
\end{itemize}
\vspace{-0.2cm}
Intuitively, this subroutine grows balls centered at sampled points simultaneously until they cover a quarter of the space.
This gives a partitioning of a quarter of the points into $O(k)$ sets, which we refer to as \textit{clusters}.

The algorithm begins by setting $U_1:=V, i=0$, and, while $|U_i| \geq \Theta(k)$, repeatedly sets $(S_i, U_{i+1}) \leftarrow \texttt{AlmostCover}(U_i)$.
This defines a sequence of nested subsets $V = U_1 \supseteq \dots \supseteq U_\ell$ for $\ell = O(\log (n/k))$ and subsets $S_i \subseteq U_i$ for each $i \in [\ell-1]$.
The output of the algorithm is the set $U := S_1 \cup \dots \cup S_{\ell-1} \cup U_{\ell}$, which has size at most $O(k\ell) \leq O(k\cdot \log (n/k))$.
Note that for each $i$, each point $x \in S_i$ is the center of a cluster and points of $U_\ell$ are singleton clusters.
Hence, the final clustering produced by the algorithm assigns each point $x \in U_i \setminus U_{i + 1}$ to its nearest center in $S_i$ (breaking the ties arbitrarily).

\textbf{The algorithm $\Sparsifier$:}
At a high level, \cite{ourneurips2023} dynamizes the static algorithm of Mettu-Plaxton by maintaining an approximate version of their hierarchy of nested sets, which are updated lazily and periodically reconstructed by using $\texttt{AlmostCover}$ in the same way as the Mettu-Plaxton algorithm.\footnote{Here we slightly change the algorithm as follows.
We will call $\texttt{AlmostCover}$ multiple times instead of once, and then select the best output among these independent calls in order to boost the approximation ratio (see Lines \ref{line:boost1} to \ref{line:boost2} in \Cref{alg:reconstruction}).}
More specifically, $\Sparsifier$ maintains the following
\begin{itemize}
    \vspace{-0.2cm}
    \item Subsets $V = U_1 \supseteq \dots \supseteq U_\ell$ for $\ell = O(\log (n/k))$.
    \vspace{-0.2cm}
    \item $S_i \subseteq U_i$ for each $i \in [\ell-1]$.
    \vspace{-0.2cm}
    \item $U := S_1 \cup \dots \cup S_{\ell-1} \cup U_\ell$ as the output.
\end{itemize}

We now describe how these items are updated as points are inserted and deleted from $V$.

\textit{Insertion:}
When a point $x$ is inserted into $V$, $x$ is added to each set in $U_1,\dots,U_\ell$.
In this case, $x$ would be a new singleton cluster.

\textit{Deletion:}
When a point $x$ is deleted from $V$, $x$ is removed from each set in $U_1,\dots,U_\ell$.
If $x$ is contained in some $S_i$, then $x$ is removed from $S_i$ and replaced with any other point currently in its cluster as the new center (if its cluster is non-empty).

We refer to the above updates as \textit{lazy} updates.
After performing a lazy update, we have the reconstruction phase as follows.

\textit{Reconstruction:} For each $i \in [\ell]$, the algorithm periodically reconstructs \emph{layer $i$} and all subsequent layers (i.e.~the sets $S_i,\dots,S_{\ell}$ and $U_{i+1},\dots,U_{\ell}$) every $\Omega(|U_i|)$ updates.
To do this, for each $i \in [\ell]$, we keep track of the number of updates on $U_i$ after the last time it was reconstructed.
After each insertion and deletion, we find the smallest index $j$ where the number of updates on $U_{j+1}$ since the last time it was reconstructed is at least $|U_j| / 4$ (note that $|U_j|$ also varies over time).
See \Cref{alg:reconstruction} for more details.
The reconstruction is done using $\texttt{AlmostCover}$ in the same way as the static algorithm described above, essentially running the Mettu-Plaxton algorithm starting with the input $U_j$.\footnote{For example, after reconstructing layer $1$, the outputs of the dynamic algorithm and the Mettu-Plaxon algorithm are the same.} We note that, after a reconstruction, the value of $\ell$ can change.

\begin{algorithm}[h]
   \caption{\texttt{Reconstruct}}
   \label{alg:reconstruction}
\begin{algorithmic}[1]
   \FOR{$i=1$ {\bfseries to} $\ell$}
   \STATE $\textsc{count}[i] =  \textsc{count}[i] + 1$
   \ENDFOR
   \STATE $j =  \text{smallest index s.t. } \textsc{count}[j] \geq |U_j| / 4$ 
   \FOR{$i=j$ {\bfseries to} $\ell$}
   \STATE $\textsc{count}[i] = 0$
   \ENDFOR
   \REPEAT
   \FOR{$m=1$ {\bfseries to} $M = \Theta(\log n)$}  \label{line:boost1}
   \STATE $(A_m, B_m) \leftarrow \texttt{AlmostCover}(U_j)$
   \ENDFOR
    \STATE $ m^\star \leftarrow \arg\min_{1 \leq m \leq M} \cost(A_m, U_j \setminus B_m) $
   \STATE $(S_j, U_{j+1}) \leftarrow (A_{m^\star},B_{m^\star})$ \label{line:boost2}
   \STATE $j = j + 1$
   \UNTIL{$|U_j| \leq 16k$} \label{loop-condition}
   \STATE $\ell = j$
   \STATE $U := S_1 \cup \dots \cup S_{\ell-1} \cup U_\ell$
\end{algorithmic}
\end{algorithm}

\subsubsection{Approximation Ratio Analysis}\label{sec:analysis-20-apprx}

We show the solution $U$ maintained by the $\Sparsifier$ is $(4,O(\log(n/k)))$-approximate.

\begin{lemma}[Lemma 3.2, \cite{ourneurips2023}]\label{lem:size-guarantee}
    We have $\ell =  O(\log (n/k))$ at any point in time during the execution of $\Sparsifier$.
\end{lemma}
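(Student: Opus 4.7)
The approach is to show the nested sets $U_1 \supseteq U_2 \supseteq \cdots \supseteq U_\ell$ satisfy $|U_{i+1}|(t) \leq \gamma \cdot |U_i|(t)$ at every time $t$ for some absolute constant $\gamma < 1$. Combined with $|U_1|(t) \leq n$ and the matching lower bound $|U_\ell|(t) = \Omega(k)$, this immediately gives $\ell = O(\log(n/k))$.

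To establish the ratio invariant, I would fix a layer $i$ and let $\tau$ be the most recent time $U_{i+1}$ was reset, which corresponds to the last reconstruction at some level $\leq i$. The \texttt{AlmostCover} guarantee gives $|U_{i+1}|(\tau) = (3/4)|U_i|(\tau)$, so the ratio begins at exactly $3/4$. Between $\tau$ and $t$, each lazy update changes $|U_i|$ and $|U_{i+1}|$ by at most one, and the trigger invariants $\textsc{count}[j] < |U_j|(t)/4$ for $j \in \{i, i+1\}$ (which must hold until a reset fires) upper-bound the total number of updates since $\tau$. A short case analysis on update types (insertion, deletion inside $U_{i+1}$, deletion inside $U_i \setminus U_{i+1}$, or deletion outside $U_i$) shows the ratio drifts toward $1$ only through insertions and deletions from $U_i \setminus U_{i+1}$, and in all such cases one of the two trigger conditions fires before the ratio can exceed $\gamma$ (a quantitative version gives $\gamma = 12/13$). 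The lower bound $|U_\ell|(t) = \Omega(k)$ follows similarly: the reconstruction loop guarantees $|U_\ell|(\tau_\ell) \leq 16k < |U_{\ell-1}|(\tau_\ell)$, so $|U_\ell|(\tau_\ell) = (3/4)|U_{\ell-1}|(\tau_\ell) > 12k$, and the same drift control keeps $|U_\ell|(t)$ from falling below a constant multiple of $k$.

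The main obstacle is the ratio invariant itself. The worst-case adversary would repeatedly delete points in $U_i \setminus U_{i+1}$, which shrinks $|U_i|$ without touching $|U_{i+1}|$ and drives the ratio upward. The argument is saved by two structural observations: first, every insertion is added to all sets $U_j$, so no new point ever lies in $U_i \setminus U_{i+1}$, and consequently the total supply of ratio-raising deletions is capped by $|U_i \setminus U_{i+1}|(\tau) = |U_i|(\tau)/4$; second, although layer $i+1$ may be reset several times between resets of layer $\leq i-1$, each such intermediate reset re-computes $U_{i+1}$ as a fresh $(3/4)$-fraction of the current $U_i$, restoring the ratio to $3/4$ and letting the invariant persist through these events.
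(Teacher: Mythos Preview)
The paper does not give its own proof of this lemma; it is imported verbatim as Lemma~3.2 of \cite{ourneurips2023}. Your strategy---deriving the geometric decay $|U_{i+1}|\le\gamma\,|U_i|$ from the trigger invariant $\textsc{count}[i]<|U_i|/4$ together with the lower bound $|U_\ell|=\Omega(k)$ coming from the stopping condition---is precisely the argument of that reference (the present paper even quotes the resulting decay statement later as \Cref{claim:decaying layers}, i.e.\ Lemma~B.1 of \cite{ourneurips2023}), so your proposal is correct and matches the intended proof.

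One small clean-up: only the $j=i$ invariant actually bounds the number of updates since $\tau$, because $\textsc{count}[i{+}1]$ may have been reset in the interim by a level-$(i{+}1)$ trigger without $U_{i+1}$ being recomputed; the $j=i$ case alone already yields $\gamma<1$, so this does not affect your argument.
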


This lemma implies that the size of the solution $U \subseteq V$ maintained by the algorithm is always $\sum_{i=0}^{\ell-1} |S_i| + |U_\ell| = \ell \cdot O(k) = O(k \cdot \log (n/k))$.

\begin{lemma}\label{lem:apprx-spars}
    The solution  $U$ maintained by the algorithm $\Sparsifier$ is $4$-approximate w.h.p.
\end{lemma}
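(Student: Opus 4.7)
The plan is to show that at any time $t$, $\cl(U^{(t)}, V^{(t)}) \leq 4 \cdot \Opt_k(V^{(t)})$ with high probability, decomposing the factor of $4$ into a factor of $2$ coming from the sampling inside $\texttt{AlmostCover}$ and a further factor of $2$ coming from center replacements after deletions between reconstructions.

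First I would analyze a single call of $\texttt{AlmostCover}(U)$ on $U \subseteq V$. Fix an optimal $k$-center solution $S^\star$ for $V$ with cost $R^\star = \Opt_k(V)$ and observe that the balls $\{B(y^\star, R^\star)\}_{y^\star \in S^\star}$ cover $V$ and hence $U$. Call a ball \emph{heavy} if $|B(y^\star, R^\star) \cap U| \geq |U|/(8k)$ and \emph{light} otherwise; then light balls together contain at most $|U|/8$ points of $U$. A uniformly random sample $S \subseteq U$ of size $2k$ hits each heavy ball with probability at least $1 - (1 - 1/(8k))^{2k} \geq 1 - e^{-1/4}$, and any $x$ in a hit heavy ball satisfies $d(x,S) \leq 2 R^\star$ by the triangle inequality. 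Summing over the heavy balls shows that the expected number of points of $U$ within distance $2 R^\star$ of $S$ is a constant fraction of $|U|$, and a reverse-Markov argument then yields that with constant probability at least $|U|/4$ points of $U$ lie within $2 R^\star$ of $S$, i.e.\ $\cl(U \setminus C, S) \leq 2 R^\star$ for the set $C$ of the $|U|/4$ closest points. Amplifying via the $M = \Theta(\log n)$ independent trials of the boosting loop in \Cref{alg:reconstruction} lifts the success probability to $1 - 1/\poly(n)$, and a union bound over the $O(\log(n/k))$ layers (by \Cref{lem:size-guarantee}) ensures that immediately after any reconstruction we have $\cl(U_i \setminus U_{i+1}, S_i) \leq 2 \cdot \Opt_k$ for every layer $i$ simultaneously, w.h.p.

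Next I would argue that this property degrades by at most a factor of $2$ between reconstructions. Insertions are easy: a newly inserted point $x$ is added to every $U_i$ and in particular to $U_\ell \subseteq U$, so $d(x,U) = 0$. For deletions, removing a non-center leaves $U$ unchanged; if a center $y \in S_i$ is deleted and replaced by some $y'$ from its (fixed) cluster, then $d(y,y') \leq r_i$, so $d(z,y') \leq d(z,y) + d(y,y') \leq 2 r_i$ for every surviving cluster point $z$ by the triangle inequality. Chained replacements preserve this bound because the cluster has diameter at most $2 r_i$ throughout its life. Combining with the previous paragraph, every $x \in V^{(t)}$ satisfies $d(x,U^{(t)}) \leq 2 r_i \leq 4 \cdot \Opt_k$, yielding the desired $4$-approximation.

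The main obstacle I expect is making the sampling analysis quantitatively tight while simultaneously relating $\Opt_k$ at the last reconstruction time to the current $\Opt_k(V^{(t)})$, since insertions and (more problematically) deletions between reconstructions can shift the optimal cost. I would address this by applying the sampling argument with $S^\star$ taken to be an optimal solution for $V^{(t)}$ rather than $V^{(t_r)}$, tracking coverage only on the surviving points $U_i^{(t_r)} \cap V^{(t)}$, and using that inserted points land directly into $U^{(t)}$ while deleted points no longer need to be covered.
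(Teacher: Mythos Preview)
Your high-level decomposition --- a sampling lemma for \texttt{AlmostCover} giving radius $r_i \leq 2\Opt_k$ with constant probability, boosting via the $\Theta(\log n)$ trials, and a further factor of $2$ from center replacement within a fixed cluster --- is exactly the paper's structure. The paper, however, routes everything through an auxiliary quantity $\mu_k^\beta(W)$ (the smallest diameter $\mu$ such that $k$ disjoint sets of diameter $\leq\mu$ cover at least a $\beta$-fraction of $W$). It proves, w.h.p.\ over the randomness \emph{at reconstruction time only}, that $r_i \leq \mu_k^{1/2}(U_i^{\text{old}})$; and then shows \emph{deterministically} that $\mu_k^{1/2}(U_i^{\text{old}}) \leq \mu_k^1(U_i^{\text{new}}) \leq \mu_k^1(V^{\text{new}}) \leq 2\Opt_k(V^{\text{new}})$, using only $|U_i^{\text{old}} \oplus U_i^{\text{new}}| \leq |U_i^{\text{new}}|/4$. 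This decoupling means a single high-probability event per reconstruction certifies the bound for every subsequent time $t$ until the next reconstruction. Your last paragraph instead proposes to re-run the sampling analysis against an optimal solution for $V^{(t)}$ at each time $t$; that is workable, but it forces a union bound over all $t$ (and all reconstructions) and, more importantly, you must then show that covering a constant fraction of the \emph{surviving} points $U_i^{(t_r)} \cap V^{(t)}$ still gives you $\geq |U_i^{(t_r)}|/4$ covered points of $U_i^{(t_r)}$ --- this is where the bound on the number of updates between reconstructions enters, and it is exactly the content of the paper's $\mu_k^{1/2}$-to-$\mu_k^1$ step, which you have not written down.

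There is also a small quantitative slip in your first paragraph. With the heavy-ball threshold $|U|/(8k)$, each heavy ball is hit with probability $\geq 1-e^{-1/4}\approx 0.22$, so your lower bound on the expected number of points within $2R^\star$ of $S$ is only $\approx 0.87 \times 0.22\,|U| \approx 0.19\,|U| < |U|/4$; reverse Markov from that bound does \emph{not} give ``$\geq |U|/4$ covered with constant probability''. The paper avoids the heavy/light split and instead minimises $\sum_i \alpha_i(1-(1-\alpha_i)^{2k})$ subject to $\sum_i \alpha_i \geq 1/2$, obtaining expected coverage $\geq \tfrac{1-e^{-1}}{2}|U|\approx 0.32\,|U|$, after which reverse Markov does go through. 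Your argument is easily repaired (raise the threshold, or do the same optimisation), but as written the constants do not close.
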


According to \Cref{lem:size-guarantee} and \Cref{lem:apprx-spars}, we can see that the solution $U$ is a $(4, O(\log (n/k))$-approximation as desired.
Now, we proceed with the proof of \Cref{lem:apprx-spars}.

\begin{proof}[Proof of \Cref{lem:apprx-spars}]
    For every $C \subseteq V$, define $$ \diam(C) := \max_{x,y \in C, \, x \neq y} d(x,y).\footnote{If $|C|=1$, simply define $\diam(C) = 0$.} $$
    For each $W \subseteq V$ and $0 < \beta \leq 1$, define $ \mu^{\beta}_k(W) $ to be the minimum real number $\mu > 0$ such that there exist $k$ disjoint subsets $C_1,C_2, \cdots, C_k \subseteq W$, such that
    $$\sum_{i \in [k]} |C_i| \geq \beta \cdot |W| \text{ and } \forall i \in [k], \, \diam(C_i) \leq \mu . $$
    We start with a series of claims.

    \begin{claim}\label{claim:mu-to-opt}
        We have $\mu_k^1(V) \leq 2 \cdot \Opt_k(V)$.
    \end{claim}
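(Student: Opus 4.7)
The plan is to exhibit, from an optimal $k$-center solution, a partition of $V$ into $k$ parts each of diameter at most $2 \cdot \Opt_k(V)$, which by definition of $\mu_k^1(V)$ immediately yields the claimed inequality.

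Concretely, let $S^\star = \{y_1^\star, \ldots, y_k^\star\} \subseteq V$ be an optimal $k$-center solution, so $\cl(S^\star, V) = \Opt_k(V)$. I would define $C_1, \ldots, C_k$ as the Voronoi partition of $V$ induced by $S^\star$: for each $x \in V$, assign $x$ to the $C_i$ whose center $y_i^\star$ minimizes $d(x, y_i^\star)$, breaking ties in some fixed deterministic way (for example, by the smallest index $i$ achieving the minimum). By construction the $C_i$ are pairwise disjoint and $\bigcup_{i \in [k]} C_i = V$, so $\sum_{i \in [k]} |C_i| = |V| \geq 1 \cdot |V|$, meeting the coverage requirement for $\beta = 1$.

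It remains to bound each diameter. For any $i \in [k]$ and any two points $x, x' \in C_i$, by the triangle inequality together with $d(x, y_i^\star) \leq \Opt_k(V)$ and $d(x', y_i^\star) \leq \Opt_k(V)$ (both of which hold because $x, x' \in V$ and $y_i^\star$ is their nearest center in $S^\star$, whose maximum distance to $V$ is $\Opt_k(V)$), we get
\[
d(x, x') \leq d(x, y_i^\star) + d(y_i^\star, x') \leq 2 \cdot \Opt_k(V).
\]
Hence $\diam(C_i) \leq 2 \cdot \Opt_k(V)$ for every $i$ (including the trivial case $|C_i| \leq 1$ where the diameter is $0$). Plugging the witness $(C_1, \ldots, C_k)$ with $\mu = 2 \cdot \Opt_k(V)$ into the definition of $\mu_k^1(V)$ yields $\mu_k^1(V) \leq 2 \cdot \Opt_k(V)$, as required.

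There is essentially no obstacle here: the claim is a direct unpacking of the definitions, and the only subtlety is making sure the Voronoi assignment is done so that the parts are disjoint (handled by a tie-breaking rule) and that one correctly handles the degenerate case of empty or singleton parts in $\diam(\cdot)$, which the paper's convention $\diam(C) = 0$ for $|C| = 1$ already covers.
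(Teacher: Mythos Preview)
Your proof is correct and matches the paper's own argument essentially step for step: both take an optimal $k$-center solution, form the induced partition of $V$ into $k$ clusters, and use the triangle inequality through each cluster's center to bound every diameter by $2\cdot\Opt_k(V)$. The only differences are cosmetic (you spell out the tie-breaking and the degenerate cases more explicitly).
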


    \begin{proof}
        Consider an optimal solution $S^\star \subseteq V$ for $k$-center on $V$ and the clusters $C_1,C_2, \dots, C_k$ corresponding to the centers $s_1,s_2,\dots,s_k$ in $S^\star$.
        For each $i \in [k]$ and each $x,y \in C_i$, we have $d(x,y) \leq d(x,s_i) + d(s_i,y) \leq 2 \cdot \Opt_k(V)$.
        Hence, $\diam(C_i) \leq  2 \cdot \Opt_k(V) $ for each $i \in [k] $.
        Since $C_1,\dots ,C_k$ is a partition of $V$, all the conditions in the definition of $\mu_k^1(V)$ are satisfied, which concludes $\mu_k^1(V) \leq 2 \cdot \Opt_k(V)$.
    \end{proof}

    \begin{claim}\label{claim:subset-opt-to-full-opt}
        For every $W \subseteq V$, we have
        $\mu_k^1(W) \leq \mu_k^1(V)$.
    \end{claim}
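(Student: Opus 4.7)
The plan is to prove the inequality by showing that any feasible witness for $\mu_k^1(V)$ can be restricted to yield a feasible witness for $\mu_k^1(W)$ of no larger value. Concretely, I would start by fixing a partition $C_1, \ldots, C_k$ of $V$ into disjoint subsets that attains the minimum in the definition of $\mu_k^1(V)$, so that $\sum_{i \in [k]} |C_i| \geq |V|$ (forcing these sets to form a partition of $V$) and $\diam(C_i) \leq \mu_k^1(V)$ for every $i$.

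Next, I would define the induced sets $C_i' := C_i \cap W$ for each $i \in [k]$. The verification has three routine parts: the $C_i'$ are disjoint (since the $C_i$ are), each $C_i'$ is a subset of $W$ by construction, and they cover all of $W$ because the $C_i$ cover $V \supseteq W$. In particular $\sum_{i \in [k]} |C_i'| \geq |W|$. Finally, since $C_i' \subseteq C_i$, we have $\diam(C_i') \leq \diam(C_i) \leq \mu_k^1(V)$.

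From the definition of $\mu_k^1(W)$ as the \emph{minimum} $\mu$ for which such a collection exists, the existence of $C_1', \ldots, C_k'$ witnesses $\mu_k^1(W) \leq \mu_k^1(V)$, which is what we wanted. There is no real obstacle here; the argument is a one-line restriction-of-partition observation, and the only subtlety worth flagging is the edge case where some $C_i'$ is empty or a singleton, which is harmless since $\diam(\emptyset)$ and $\diam(\{x\})$ are both $0$ under the convention stated just above the claim.
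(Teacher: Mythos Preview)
Your proposal is correct and follows exactly the same approach as the paper: take an optimal collection $\{C_i\}_{i\in[k]}$ witnessing $\mu_k^1(V)$, intersect each set with $W$, and verify that the resulting collection $\{C_i\cap W\}_{i\in[k]}$ is a feasible witness for $\mu_k^1(W)$ with value at most $\mu_k^1(V)$. The only difference is that you spell out the routine verifications (disjointness, coverage, diameter bound) and flag the empty/singleton edge case, which the paper leaves implicit.
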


    \begin{proof}
        For every collection of subsets $\{C_i\}_{i \in [k]}$  of $V$ satisfying the conditions in the definition of $\mu_k^1(V)$, the collection $\{C_i \cap W\}_{i \in [k]}$ satisfy the same conditions in the definition of $\mu_k^1(W)$.
        Hence, $\mu_k^1(W) \leq \mu_k^1(V)$.
    \end{proof}

    \begin{claim}\label{claim:OldHalf-to-NewFull}
        For every $W$ and $W'$ satisfying $ |W \oplus W'| \leq |W|/4 $, we have $\mu_k^{1/2}(W') \leq \mu_k^1(W)$.
    \end{claim}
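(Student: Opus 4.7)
The plan is to prove the claim by directly restricting the optimal partition witnessing $\mu_k^1(W)$ to the set $W'$, and then verifying that the restricted partition still covers at least half of $W'$.

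More concretely, let $\mu := \mu_k^1(W)$ and let $C_1,\ldots,C_k \subseteq W$ be disjoint subsets achieving $\mu$, i.e.\ satisfying $\sum_{i \in [k]} |C_i| \geq |W|$ and $\diam(C_i) \leq \mu$ for all $i$. Note that the first condition forces $\{C_i\}_{i \in [k]}$ to form a partition of $W$. Now define $C'_i := C_i \cap W'$ for each $i \in [k]$. The $C'_i$ are disjoint subsets of $W'$, and since $C'_i \subseteq C_i$ we automatically have $\diam(C'_i) \leq \diam(C_i) \leq \mu$. So the only remaining task is to verify the size bound $\sum_{i \in [k]} |C'_i| \geq |W'|/2$ required in the definition of $\mu_k^{1/2}(W')$.

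For the size bound, I would combine two elementary estimates coming from the hypothesis $|W \oplus W'| \leq |W|/4$. First, since $\{C_i\}$ partitions $W$,
\[
\sum_{i \in [k]} |C'_i| \;=\; |W \cap W'| \;=\; |W| - |W \setminus W'| \;\geq\; |W| - |W \oplus W'| \;\geq\; \tfrac{3}{4}|W|.
\]
Second,
\[
|W'| \;\leq\; |W| + |W' \setminus W| \;\leq\; |W| + |W \oplus W'| \;\leq\; \tfrac{5}{4}|W|.
\]
Combining the two gives $\sum_i |C'_i| \geq \tfrac{3}{4}|W| \geq \tfrac{3}{5}|W'| \geq \tfrac{1}{2}|W'|$, which is exactly the condition needed for the collection $\{C'_i\}$ to witness $\mu_k^{1/2}(W') \leq \mu$.

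There is no real obstacle here; the whole claim is a straightforward pigeonhole/inclusion-exclusion argument once one has the idea of just intersecting the optimal clusters with $W'$. The only subtle point is remembering that the condition $\sum|C_i| \geq |W|$ in the definition of $\mu_k^1(W)$ forces the $C_i$ to cover all of $W$ (since they are disjoint subsets of $W$), which is what lets us turn $\sum|C'_i|$ into $|W \cap W'|$ rather than a smaller quantity.
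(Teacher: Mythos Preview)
Your proof is correct and follows essentially the same approach as the paper: intersect the optimal clusters for $\mu_k^1(W)$ with $W'$ and verify the coverage bound. The paper states the final inequality $|W\cap W'|\geq |W'|/2$ as a ``simple counting argument'' without details, whereas you spell it out explicitly via $\sum_i |C'_i|\geq \tfrac{3}{4}|W|\geq \tfrac{3}{5}|W'|$.
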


    \begin{proof}
        Let $\mu^\star = \mu_k^1(W)$ and assume $\{C_i\}_{i \in [k]}$ is the optimal collection in the definition of $\mu_k^1(W)$ such that $\diam(C_i) \leq \mu^\star$ for all $i \in [k]$.
        Define $C_i' := C_i \cap W'$ for each $i \in [k]$.
        We show that $\{C_i'\}_{i \in [k]}$ satisfy the conditions in the definition of $\mu_k^{1/2}(W')$.
        Obviously, $\diam(C_i') = \diam(C_i \cap W') \leq \diam(C_i) \leq \mu^\star$.
        Since $\{C_i\}_{i \in [k]}$ is a partitioning of $W$, we have
        $|\cup_{i \in [k]} C_i'| = |(\cup_{i \in [k]} C_i) \cap W' | = |W \cap W'| \geq |W'|/2$.
        The last inequality follows from $|W \oplus W'| \leq |W|/4$ and a simple counting argument.
    \end{proof}

    \begin{claim}\label{claim:output-almost-cover}
        Consider a single call to $ \texttt{AlmostCover}(W)$, where $S \subseteq W$ is sampled and the balls around $S$ of radius $r$ cover $C$ such that $C = |W|/4$.
        Then, with constant probability, we have $ r \leq \mu_k^{1/2}(W)$.
    \end{claim}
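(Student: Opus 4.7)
The plan is to compare the random output of $\texttt{AlmostCover}(W)$ with the witness clusters realizing $\mu^\star := \mu_k^{1/2}(W)$. Fix disjoint subsets $C_1^\star, \dots, C_k^\star \subseteq W$ with $\sum_i |C_i^\star| \geq |W|/2$ and $\diam(C_i^\star) \leq \mu^\star$ for every $i$. The key observation is that whenever $S$ hits some $C_i^\star$, every point of $C_i^\star$ lies within distance $\mu^\star$ of $S$ by the diameter bound. Writing $H := \bigcup_{i \,:\, S \cap C_i^\star \neq \emptyset} C_i^\star$ for the union of hit clusters, the whole set $H$ is contained in the $\mu^\star$-ball neighborhood of $S$. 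Since $C$ consists of the $|W|/4$ points of $W$ closest to $S$, the event $|H| \geq |W|/4$ immediately implies $r = \max_{x \in C} d(x,S) \leq \mu^\star$, and the claim follows.

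It then remains to show that $|H| \geq |W|/4$ with constant probability, which I will do by bounding the expected mass $M := |(\bigcup_i C_i^\star) \setminus H|$ of unhit clusters and applying Markov's inequality. Setting $p_i := |C_i^\star|/|W|$, a standard calculation for uniform sampling of $2k$ points without replacement gives $\Pr[S \cap C_i^\star = \emptyset] \leq (1 - p_i)^{2k} \leq e^{-2k p_i}$. Summing over clusters yields
$$\mathbb E[M] \;\leq\; |W| \sum_{i=1}^{k} p_i\, e^{-2k p_i} \;\leq\; |W| \cdot k \cdot \max_{p \in [0,1]} p\,e^{-2kp} \;=\; \frac{|W|}{2e},$$
using that $p\, e^{-2kp}$ is maximized at $p = 1/(2k)$ with value $1/(2ek)$. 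Markov's inequality then gives $\Pr[M \geq |W|/4] \leq 2/e < 1$, so with probability at least $1 - 2/e > 0$ we have $M < |W|/4$, and consequently $|H| \geq \sum_i |C_i^\star| - M \geq |W|/2 - |W|/4 = |W|/4$, closing the argument.

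The main subtlety, and the reason the sample size in $\texttt{AlmostCover}$ is set to $2k$, is that the worst-case contribution to $\mathbb E[M]$ comes from clusters of size roughly $|W|/(2k)$, which are hit with probability bounded away from $1$. A substantially smaller sample size would make $\mathbb E[M]$ exceed $|W|/4$ and the Markov step would no longer close. Beyond this calibration there is no serious obstacle: the argument is a clean combination of a diameter-based covering observation with a single first-moment computation on the random hitting of the witness clusters.
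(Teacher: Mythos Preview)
Your proof is correct and follows essentially the same approach as the paper: both fix the witness clusters for $\mu_k^{1/2}(W)$, observe that any hit cluster is fully covered at radius $\mu^\star$, and use a first-moment calculation on which clusters are hit by the random sample $S$. The only cosmetic difference is in the final step---the paper lower-bounds $\mathbb{E}[|B(S,\mu^\star)|]$ and extracts the probability via a boundedness (reverse-Markov) argument, whereas you upper-bound the missed mass $M$ and apply Markov directly, which is slightly cleaner and yields a modestly better constant ($1-2/e$ versus $(1-2/e)/3$).
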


\begin{proof}
    Assume $\mu^\star = \mu_k^{1/2}(W)$. It is sufficient to show that with constant probability, we have $|B(S,\mu^\star)| \geq |W|/4$ (or $B(S,\mu^\star) \subseteq C$, equivalently).
    Let $\{C_i\}_{i \in [k]}$ be the optimal collection in the definition of $\mu_k^{1/2}(W)$, which means $\diam(C_i) \leq \mu^\star$ for $i \in [k]$. 
    Let $X_i$ be the indicator random variable for $S \cap C_i \neq \emptyset$.
    According to the uniform sampling from $W$, the probability that a point is sampled from $C_i$ equals $\alpha_i := |C_i|/|W|$.
    Hence,
    $ \mathrm{Pr}[X_i = 1] = 1 - (1 - \alpha_i)^{|S|}$. 
    If there is a point sampled from $C_i$, (i.e.~$X_i = 1$), then all of the points in $C_i$ have a distance of at most $\mu^\star$ from the sampled point since $\diam(C_i) \leq \mu^\star$.
    Hence,
    \begin{align*}
       \mathbb E[|B(S, \mu^\star)|] &\geq \sum_{i=1}^k |C_i| \cdot \mathrm{Pr}[X_i=1] \\
       &\geq |W| \cdot \sum_{i=1}^k \alpha_i \cdot (1 - (1 - \alpha_i)^{|S|}). 
    \end{align*}
    Note that $1 \geq \sum_{i=1}^k \alpha_i \geq 1/2$.
    The above function takes its minimum when $1/k \geq \alpha_1=\alpha_2=\cdots=\alpha_k \geq 1/(2k)$.\footnote{This can be simply verified by elementary calculus.}
    Thus,
    \begin{align*}
       \mathbb E[|B(S, \mu^\star)|] &\geq |W| \cdot \sum_{i=1}^k \alpha_i \cdot (1 - (1 - \alpha_i)^{|S|}) \\ &\geq |W| \cdot (1/2) \cdot (1-(1-1/(2k))^{|S|}) \\ &\geq |W| \cdot (1-e^{-|S|/(2k)})/2 \geq \frac{1-e^{-1}}{2}|W|.  
    \end{align*}
    Now, since we know that $0 \leq |B(S,\mu^\star)| \leq |W|$, we can conclude that with a constant probability we have $|B(S,\mu^\star)| \geq |W|/4$. More precisely, assume $p = \mathrm{Pr}[|B(S,\mu^\star)| \geq |W|/4]$, then we have
    \begin{align*}
     \frac{1-1/e}{2} |W| &\leq \mathbb{E}[|B(S,\mu^\star)|] 
     =  \sum_{t=0}^{|W|} \mathrm{Pr}[|B(S,\mu^\star)|=t] \cdot t \\
     &\leq \sum_{t=0}^{|W|/4-1} \mathrm{Pr}[|B(S,\mu^\star)|=t] \cdot |W|/4 \\
     &\ \ \ \ \ + \sum_{t=|W|/4}^{|W|} \mathrm{Pr}[|B(S,\mu^\star)|=t] \cdot |W| \\ 
     &= |W| \cdot ((1-p)/4 + p)
    \end{align*}
    This concludes $p \geq (1-2e^{-1})/3 = \Omega(1)$.
\end{proof}

    Now, we are ready to complete the proof of \Cref{lem:apprx-spars}.
    Consider that $V^\new$ is the current dynamic space, and let $x \in V^\new$ be arbitrary.
    We show that $d(x,U) \leq 4 \cdot \Opt_k(V^\new)$, where $U$ is the current output of the $\Sparsifier$, which
    completes the proof.
    Let $i^\star \in [\ell]$ be the largest index such that $x \in U_{i^\star}$ (Since $U_1 = V^\new$, this index exists).
    If $i^\star = \ell$, we obviously have $x \in U_{\ell} \subseteq U$ and $d(x,U)=0$.
    Now, assume $i^\star < \ell$.
    We use the superscripts `$\new$' and `$\old$' to indicate the status of an object at the current time, and the last time that \texttt{AlmostCover} was called on $U_{i^\star}$, respectively.
    For instance, the output of the last call $\texttt{AlmostCover}$ on $U_{i^\star}$ equals $(S^\old_{i^\star}, U^\old_{i^\star} \setminus C)$.

    According to the definition of $i^\star$, We have that $x$ was not removed from the space between times $\old$ and $\new$, as otherwise, $x$ became part of $U_{i^\star + 1}$ and was not removed from $U_{i^\star + 1}$ until the next call to $\texttt{AlmostCover}(U_{i^\star})$, which is a contradiction.
    We also have $x \in C$, as otherwise, $x$ would be inside $U_{i^\star+1}^\new$.
    Hence, when we sampled points of $S_{i^\star}^\old$, $x$ must have been assigned to the growing ball around some point $s \in S_{i^\star}^\old$, which we denote by $B \subseteq C$.
    Note that during the updates between $\old$ and $\new$, the center $s \in B$ might have been swapped with another point $s' \in B$ (Since $x$ was never removed from $B$, the ball $B$ never became empty and the algorithm maintains at least one point $s' \in S_{i^\star}^\new$ as a representative for $B$).
    As a result, there exists $s' \in S_{i^\star}^\new$ such that 
    \begin{equation}\label{eq:some-equation}
     d(x,s') \leq d(x,s) + d(s,s') \leq 2 \cdot \mu_k^{1/2}(U_{i^\star}^\old) . 
    \end{equation}
    The last inequality follows w.h.p.~according to \Cref{claim:output-almost-cover} and the fact that we made $\Theta(\log n)$ independent calls to $\texttt{AlmostCover}(U_{i^\star}^\old)$ and select the best between all (see Lines \ref{line:boost1} to \ref{line:boost2} in \Cref{alg:reconstruction}).
    Finally, we conclude
    \begin{align*}
       d(x,U) &\leq d(x,S_{i^\star}^\new) \leq d(x,s') \\
       &\leq 2 \cdot \mu_k^{1/2}(U^\old_{i^\star}) \leq 2 \cdot \mu_k^{1}(U^\new_{i^\star}) \\
       &\leq 2 \cdot \mu_k^{1}(V^\new) \leq 4 \cdot \Opt_k(V^\new) . 
    \end{align*}
    The first inequality follows since $S_{i^\star}^\new \subseteq U$, the second one follows since $s' \in S_{i^\star}^\new$, the third one follows from
    \Cref{eq:some-equation}, the fourth one follow from \Cref{claim:OldHalf-to-NewFull} for $W = U^\new_{i^\star}$ and $W' = U^\old_{i^\star}$ (note that $ |U_{i^\star}^\old \oplus U_{i^\star}^\new| \leq \textsc{count}[i^\star] \leq |U_{i^\star}^\new| / 4$), the fifth one follows from \Cref{claim:subset-opt-to-full-opt}, and the last one follows from \Cref{claim:mu-to-opt}.
    As a result, w.h.p., we have $d(x,U) \leq 4 \cdot \Opt_k(V^\new)$.
\end{proof}

\subsubsection{Recourse Analysis}

In this section, we prove the following lemma, which summarizes the recourse guarantee of the $\Sparsifier$.

\begin{lemma}
    The amortized recourse of the algorithm $\Sparsifier$ is constant.
\end{lemma}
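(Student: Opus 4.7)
The plan is to decompose the changes to $U = S_1 \cup \cdots \cup S_{\ell-1} \cup U_\ell$ into \emph{lazy} changes (caused by an individual insertion or deletion in $V$) and \emph{reconstruction} changes (caused by calls to \texttt{Reconstruct}), and argue that each contributes $O(1)$ amortized per update. The lazy part is immediate: an insertion of $x$ adds $x$ as a new singleton cluster in $U_\ell$, contributing one point to $U$, while a deletion of $x$ removes $x$ from $U$ and swaps in a replacement point from its cluster whenever $x$ was a center in some $S_i$, for a total of at most two changes to $U$ per update.

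The main task is to bound the total reconstruction recourse by $O(T)$ over $T$ updates. A reconstruction triggered at layer $j^\ast$ (meaning $j^\ast$ is the smallest index with $\textsc{count}[j^\ast]\ge|U_{j^\ast}|/4$) rebuilds only $S_{j^\ast},\ldots,S_{\ell'-1},U_{\ell'}$; since each of these has size $O(k)$ and the recursion in \texttt{AlmostCover} satisfies $|U_{i+1}|\le\tfrac34|U_i|$ and halts at $|U_{\ell'}|\le 16k$, the recourse per trigger is at most $C\,k(\ell-j^\ast+1)$ for some absolute constant $C$. To handle the frequency of triggers, I will use the potential
\[
\Phi \;:=\; c\sum_{j}(3/4)^{\ell-j}(\ell-j)\,\textsc{count}[j]
\]
for a sufficiently large constant $c$. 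Each update increases $\Phi$ by $c\sum_{m\ge 0}m(3/4)^m=O(c)=O(1)$, while a trigger at $j^\ast$ decreases $\Phi$ by at least $c(3/4)^{\ell-j^\ast}(\ell-j^\ast)\cdot|U_{j^\ast}|/4$. Invoking the structural invariant $|U_{j^\ast}|\ge 16k(4/3)^{\ell-1-j^\ast}$ maintained by \texttt{Reconstruct}, this drop is $\Omega(ck(\ell-j^\ast))$, which for $c$ large enough pays for the $C\,k(\ell-j^\ast)$ reconstruction cost, so the amortized reconstruction recourse per update is $O(1)$.

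The main obstacle is establishing the invariant $|U_j|\ge\Omega(k(4/3)^{\ell-j})$ at the time of each trigger. Right after any reconstruction that rebuilds layer $j$ this invariant is immediate: the \texttt{Repeat} loop exits only when $|U_{\ell'}|\le 16k$, and every preceding $|U_{i+1}|$ satisfies $|U_{i+1}|=(3/4)|U_i|$, so $|U_j|>16k(4/3)^{\ell-1-j}$. Between reconstructions $|U_j|$ can drift from lazy insertions and deletions, and $\textsc{count}[j]$ can be reset by triggers at smaller layers $j'<j$; the key point is that any sufficient shrinkage of $|U_j|$ from deletions simultaneously pushes $\textsc{count}[j]$ closer to the threshold $|U_j|/4$, so the invariant cannot fail by more than a constant factor before a new reconstruction at some layer $\le j$ restores it. This level of robustness is exactly what the potential argument above requires, and combining with the $O(1)$ lazy recourse gives the claimed $O(1)$ amortized recourse.
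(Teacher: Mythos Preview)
Your decomposition into lazy and reconstruction recourse, and the treatment of the lazy part, match the paper exactly. For the reconstruction part, however, the paper does \emph{not} use a potential function: it uses a direct charging scheme. When a reconstruction is triggered at layer $j^\ast$, its $O(k\log(|U_{j^\ast}|/k))$ recourse is charged evenly to the $\Omega(|U_{j^\ast}|)$ updates that have occurred since layer $j^\ast$ was last rebuilt, so each such update receives a charge of $O\!\bigl(\tfrac{k}{|U_{j^\ast}|}\log(|U_{j^\ast}|/k)\bigr)$ from this layer. Summing over all layers $i$ and invoking the exponential-decay invariant $|U_i|\ge |U_\ell|/(1-\epsilon)^{\ell-i}$ (cited as Lemma~B.1 from \cite{ourneurips2023}, together with $|U_\ell|\ge 9k$) yields a convergent series and hence an $O(1)$ total charge per update. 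This is layer-local accounting and never refers to the global value of $\ell$.

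Your potential $\Phi = c\sum_j (3/4)^{\ell-j}(\ell-j)\,\textsc{count}[j]$ has a genuine gap precisely because it \emph{does} depend on $\ell$, and $\ell$ changes at every reconstruction. When layer $j^\ast$ triggers, you credit yourself the drop from resetting $\textsc{count}[j]$ for $j\ge j^\ast$, but you never account for the change in the weights $(3/4)^{\ell-j}(\ell-j)$ on the surviving terms $j<j^\ast$, whose counters can be as large as $\Theta(|U_j|)$. That weight shift is not obviously nonpositive and can dominate the drop you computed. A second symptom of the same problem is the edge case $j^\ast=\ell$: your weight $(3/4)^{\ell-j^\ast}(\ell-j^\ast)$ vanishes, so the claimed potential drop is zero, yet the reconstruction still costs $\Omega(k)$. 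Finally, you hard-code the ratio $3/4$ both in the potential and in the invariant $|U_j|\ge 16k(4/3)^{\ell-1-j}$, but after lazy drift the actual decay constant is only some $1-\epsilon$ (this is what the cited lemma gives), so the two exponentials in your ``drop $\ge \Omega(ck(\ell-j^\ast))$'' computation do not cancel as written. The charging argument in the paper sidesteps all of these issues.
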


The recourse caused by lazily updating the set $U$ after an insertion or deletion is at most $O(1)$ since only $O(1)$ many points are added or removed from each set. We now bound the recourse caused by periodically reconstructing the layers by using a \emph{charging scheme}, similar to the one used by \cite{ourneurips2023} to bound the update time of this algorithm. Let $\sigma_1,\dots,\sigma_T$ denote a sequence of updates handled by the algorithm. Each time the algorithm performs a reconstruction starting from layer $i$, we \emph{charge} the recourse caused by this update to the updates that have occurred since the last time that $U_i$ was reconstructed. If the recourse caused by this update is $q$, and $p$ updates have occurred since the last time that $U_i$ was reconstructed, then each of these updates receives a charge of $q/p$. We denote the total charge assigned to the update $\sigma_t$ by $\Phi_t$. We can see that the amortized recourse of our algorithm is at most $(1/T) \cdot \sum_{t=1}^T \Phi_t$. We now show that, for each $t \in [T]$, $\Phi_t \leq O(1)$, implying that the amortized recourse is $O(1)$.

Consider some update $\sigma_t$ and let $U_1,\dots,U_\ell$ denote the sets maintained by the algorithm during this update. The following claim shows that the sizes of the sets $U_i$ decay exponentially.
\begin{lemma}[Lemma B.1, \cite{ourneurips2023}]\label{claim:decaying layers}
    There exists a constant $\epsilon \in (0,1)$ such that,
    for all $i \in [\ell]$, we have that $|U_{\ell}| \leq (1 - \epsilon)^{\ell-i} \cdot |U_{i}|$.
\end{lemma}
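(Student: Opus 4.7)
The approach is to reduce the claim to the single-step inequality $|U_{j+1}^t|\le(1-\epsilon)|U_j^t|$ at every time $t$ and every $j\in[\ell-1]$; once this is in hand, telescoping immediately gives $|U_\ell^t|\le(1-\epsilon)^{\ell-i}|U_i^t|$.

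To prove the single-step bound for fixed $j$ and $t$, I would let $t_0\le t$ be the latest time at which $U_{j+1}$ was written by a call to $\texttt{AlmostCover}(U_j)$ (either during the initial construction or during a reconstruction whose trigger was at some layer $\le j$ and whose loop reached iteration $j$). By construction $|U_{j+1}^{t_0}|=(3/4)|U_j^{t_0}|$. The choice of $t_0$, together with the fact that any reconstruction triggered at a layer $\ge j+1$ leaves $U_j$ and $U_{j+1}$ untouched, implies that during $(t_0,t]$ both sets evolve only through lazy insertions and deletions. Letting $I$ denote insertions, $D$ total deletions, and $D_j\ge D_{j+1}$ the deletions of points lying in $U_j$ and $U_{j+1}$ respectively (all counted over $(t_0,t]$), the lazy update rules combined with $|U_j^{t_0}|=(4/3)|U_{j+1}^{t_0}|$ yield the key identity
\[
    |U_j^t| \;=\; \tfrac{4}{3}|U_{j+1}^t| - \tfrac{I}{3} + \tfrac{4}{3}D_{j+1} - D_j.
\]

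The last ingredient is a bound on $I+D$, which I would obtain from the algorithmic invariant $\textsc{count}[i]^t<|U_i^t|/4$ that is maintained after every call to $\texttt{Reconstruct}$. The crucial observation is that $\textsc{count}[j]$ is reset at $t_0$ (the trigger was at some $j'\le j$, which resets all counts indexed $\ge j'$) and is never reset between $t_0$ and $t$, so $I+D=\textsc{count}[j]^t<|U_j^t|/4$. Plugging this in, together with $D_{j+1}\ge 0$ and $D_j\le D$, the identity above rearranges to $|U_{j+1}^t|<(15/16)|U_j^t|$, so $\epsilon=1/16$ suffices.

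The main obstacle is the correct pairing of the reference time with the algorithmic count. A tempting but wrong choice is to anchor $t_0$ at the last reset of $\textsc{count}[j+1]$ and use the invariant $\textsc{count}[j+1]^t<|U_{j+1}^t|/4$; this fails because a trigger exactly at layer $j+1$ resets $\textsc{count}[j+1]$ without rewriting $U_{j+1}$, thereby breaking the crucial identity $|U_{j+1}^{t_0}|=(3/4)|U_j^{t_0}|$. The fix is to anchor on the last genuine rewrite of $U_{j+1}$ and to use $\textsc{count}[j]$ instead, which is reset precisely at such events. A small case analysis is also needed to rule out an intervening reconstruction whose loop terminates before iteration $j$ and thereby discards $U_{j+1}$ without re-adding it, since otherwise $U_{j+1}$ would not be present at time $t$.
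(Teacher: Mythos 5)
The paper does not actually prove this lemma itself --- it is imported verbatim as Lemma~B.1 of \cite{ourneurips2023} --- so there is no in-paper proof to compare against. Your argument is correct and is the natural one for this construction: anchor at the last genuine rewrite of $U_{j+1}$ (where $|U_{j+1}|=\tfrac34|U_j|$ exactly), observe that neither $U_j$ nor $U_{j+1}$ is rewritten afterwards, and control the drift of both sets by the reconstruction threshold $\textsc{count}[j]<|U_j|/4$, which is valid precisely because $\textsc{count}[j]$ is reset exactly at rewrites of $U_{j+1}$ and not in between. The identity, the bound $-\tfrac{I}{3}+\tfrac43 D_{j+1}-D_j\geq -(I+D)$, and the resulting constant $\epsilon=1/16$ all check out, and you correctly flag the two real traps (anchoring on $\textsc{count}[j+1]$ instead of $\textsc{count}[j]$, and the possibility that an intervening reconstruction truncates the hierarchy below layer $j+1$, which cannot happen if $U_{j+1}$ exists at time $t$ by maximality of $t_0$).
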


\begin{claim}\label{claim:recourse of Ui}
    The recourse caused by the next reconstruction of layer $i$ is $O(k\log(|U_i|/k))$.
\end{claim}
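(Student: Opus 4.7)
The plan is to bound the recourse $|U^{old} \oplus U^{new}|$ of a reconstruction starting at layer $i$ by bounding the total size of the modified sets. Let $\ell^{old}$ and $\ell^{new}$ be the number of layers before and after reconstruction. Since the sets $S_1,\dots,S_{i-1}$ are untouched, the recourse is at most
\begin{equation*}
\sum_{j=i}^{\ell^{old}-1} |S_j^{old}| + |U_{\ell^{old}}^{old}| + \sum_{j=i}^{\ell^{new}-1} |S_j^{new}| + |U_{\ell^{new}}^{new}|.
\end{equation*}

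First, I would establish that every set appearing in this sum has size $O(k)$. The bound $|S_j| \le 2k$ holds at all times, because $\texttt{AlmostCover}$ samples exactly $2k$ points and subsequent lazy deletions can only remove points from $S_j$. The bound $|U_{\ell^{new}}^{new}| \le 16k$ is immediate from the termination condition of the \texttt{repeat} loop in \Cref{alg:reconstruction}. For $|U_{\ell^{old}}^{old}|$, I would argue via the reconstruction schedule: right after the last reconstruction that reached layer $\ell^{old}$ the size was at most $16k$, and between that reconstruction and now the number of updates to $U_{\ell^{old}}$ is at most $|U_{\ell^{old}-1}|/4 = O(k)$, yielding $|U_{\ell^{old}}^{old}| = O(k)$.

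Next, I would bound the depths $\ell^{new} - i$ and $\ell^{old} - i$ by $O(\log(|U_i|/k))$. For the new depth, the values produced inside the \texttt{repeat} loop satisfy $|U_{j+1}^{new}| = (3/4)|U_j^{new}|$, so the loop ends after at most $\log_{4/3}(|U_i|/(16k)) + 1$ iterations. For the old depth, I would invoke \Cref{claim:decaying layers} to get $|U_{\ell^{old}}^{old}| \le (1-\epsilon)^{\ell^{old}-i} |U_i^{old}|$, which, together with the lower bound $|U_{\ell^{old}}^{old}| = \Omega(k)$ (obtained by the same reconstruction-schedule argument: after a full reconstruction $|U_\ell| \ge (3/4)\cdot 16k > 12k$, and it can shrink by at most $O(k)$ before the next reconstruction of that layer), gives $\ell^{old}-i = O(\log(|U_i|/k))$.

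Combining the two ingredients, the recourse is bounded by $O(k)$ times $O(\log(|U_i|/k))$, yielding the desired $O(k\log(|U_i|/k))$ bound. The main obstacle is carefully establishing the lower bound $|U_{\ell^{old}}^{old}| = \Omega(k)$; this is needed to convert \Cref{claim:decaying layers} into a logarithmic bound on $\ell^{old}-i$ in terms of $|U_i|/k$ (rather than $|U_i|$), and it is the only place where the precise invariants of the reconstruction trigger (reconstruct layer $j$ once $\textsc{count}[j+1] \ge |U_j|/4$) are essential.
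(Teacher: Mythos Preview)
Your proposal is correct and follows essentially the same route as the paper: bound each contributing set $S_j$ and $U_\ell$ by $O(k)$, then bound the number of affected layers by $O(\log(|U_i|/k))$ via the exponential decay of \Cref{claim:decaying layers} together with $|U_\ell|=\Theta(k)$. The paper's own proof is much terser --- it does not separate the ``old'' and ``new'' contributions and simply asserts that the recourse is $O(k(\ell-i+1))$ with $\ell-i=O(\log(|U_i|/k))$; the lower bound $|U_\ell|\ge 9k$ that you identify as the main obstacle is established in the surrounding text rather than inside the claim's proof. One small slip: the reconstruction trigger in \Cref{alg:reconstruction} is $\textsc{count}[j]\ge |U_j|/4$ (not $\textsc{count}[j+1]\ge |U_j|/4$), so your bound on the number of lazy updates to $U_{\ell^{old}}$ should be phrased in terms of $|U_{\ell^{old}}|$ itself rather than $|U_{\ell^{old}-1}|$; this does not affect the conclusion.
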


\begin{proof}
    We first note that the size of a set $U_i$ changes by at most a constant factor before it is reconstructed.
    Since each $U_j$ has size $O(k)$, the recourse caused by reconstructing $U_i$ after the call to $\texttt{AlmostCover}(U_i)$ is at most $O(k(\ell - i + 1))$.
    Since the sizes of the sets decrease exponentially by \Cref{claim:decaying layers}, and $|U_\ell| = \Theta(k)$, it follows that $\ell - i = O(\log(|U_i|/k))$.
    Hence, the total recourse is $O(k\log(|U_i|/k))$.
\end{proof}

Since the size of a set $U_i$ changes by at most a constant factor, we know that the recourse from reconstructing $U_i$ is charged to $\Omega(|U_i|)$ many updates. Hence, by \Cref{claim:recourse of Ui}, the recourse charged to $\sigma_t$ from this reconstruction is $O(k\log(|U_i|/k)/|U_i|)$. Consequently, we can upper bound the total recourse charged to $\sigma_t$ by
$$\Phi_t \leq \sum_{i=1}^\ell O \! \left( \frac{k}{|U_i|} \log \!\left(\frac{|U_i|}{k} \right) \right).$$
The size of the set $U_\ell$ is always at least $9k$.  This is because $\texttt{AlmostCover}(U)$ reduces the size of $U$ by a quarter, and combining with the condition in Line~\ref{loop-condition} in \Cref{alg:reconstruction}, the size of $U_\ell$ after each reconstruction is at least $16k \cdot(3/4) = 12k$. Now, during a sequence of at most $|U_\ell| / 4$ updates before the next reconstruction from some layer $j \leq \ell$, the size of $U_\ell$ reduces to at most $12k\cdot(3/4) = 9k$ (even if all of the updates are deletions).
Hence, we always have $|U_\ell| \geq 9k$.
Now, by \Cref{claim:decaying layers}, it follows that $|U_i| \geq |U_\ell| / (1 - \epsilon)^{\ell - i} \geq 9k/ (1 - \epsilon)^{\ell - i}$, so
\begin{eqnarray*}
\Phi_t &\leq& O(1) \cdot \sum_{i=1}^\ell \frac{k}{|U_i|} \log \!\left(\frac{|U_i|}{k} \right) \\
&\leq& O(1) \cdot \sum_{i=1}^\ell \frac{(1 - \epsilon)^{\ell - i}}{9} \log \!\left(\frac{9}{(1 - \epsilon)^{\ell - i}} \right) \\
&\leq& O(1) \cdot \sum_{i=1}^\ell (1 - \epsilon)^{\ell - i} (\ell - i + 1) \\
&\leq& O(1) \cdot \sum_{j = 0}^\infty (j + 1)(1 - \epsilon)^{j}
\leq O(1),
\end{eqnarray*}
where the second inequality follows from the fact that $f(x) :=\log(x)/x$ is decreasing for all $x \geq e$ and the
last inequality follows from the fact that $\sum_{j = 0}^{\infty} (j+1)(1 - \epsilon)^j$ is a convergent series.

\subsubsection{Update Time Analysis}

\begin{lemma}
    The amortized update time of the algorithm $\Sparsifier$ is $O(k \log^2 n)$.
\end{lemma}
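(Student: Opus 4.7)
The plan is to follow a charging scheme analogous to the one used in the recourse analysis, now tracking actual running time. I would split the cost of handling an update into two parts: the cost of the lazy update to $U_1,\ldots,U_\ell$ after each insertion or deletion, and the cost of the reconstructions triggered by \Cref{alg:reconstruction}. The lazy part is easy: with suitable dictionary and cluster-pointer data structures, adding or removing a point from each of the $\ell$ sets, and (upon a deletion) replacing a deleted center in some $S_i$ by an arbitrary other point of its cluster, costs $O(\ell) = O(\log n)$ time per update, which is negligible relative to the target.

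Next, I would bound the cost of a single reconstruction starting at some layer $j$. Each iteration of the boosting loop in \Cref{alg:reconstruction} performs one call to $\texttt{AlmostCover}(U_{j'})$ and one evaluation of $\cost(A_m, U_{j'} \setminus B_m)$; each of these amounts to computing distances between the $|U_{j'}|$ points of $U_{j'}$ and $O(k)$ sampled centers and then doing an aggregate (selecting the $|U_{j'}|/4$ closest points, or taking the maximum distance), so one iteration runs in $O(k|U_{j'}|)$ time. With $M = \Theta(\log n)$ boosting rounds, reconstructing a single layer $j'$ takes $O(k|U_{j'}| \log n)$ time. A reconstruction starting at layer $j$ processes all layers $j, j+1, \ldots, \ell$, and by \Cref{claim:decaying layers} (together with the fact that $|U_{j'+1}| \leq (3/4)|U_{j'}|$ right after a reconstruction), these sizes decay geometrically, so the geometric series telescopes to a total reconstruction cost of $O(k|U_j| \log n)$.

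Finally, I would apply the same charging scheme already used in the recourse analysis. A reconstruction at layer $j$ is only triggered once $\textsc{count}[j] \geq |U_j|/4$, so at least $\Omega(|U_j|)$ updates have occurred on $U_j$ since the previous time it was reconstructed; distributing the $O(k|U_j|\log n)$ cost equally among those updates gives each an amortized charge of $O(k\log n)$. Over its lifetime, each update receives at most one such charge per layer (once $\textsc{count}[i]$ is reset by a reconstruction at layer $\leq i$, the update leaves the charging window for layer $i$), so summing over the $\ell = O(\log(n/k))$ layers yields an amortized cost of $O(\ell \cdot k\log n) = O(k \log^2 n)$ per update. Combined with the $O(\log n)$ lazy cost, this gives the desired bound. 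I do not anticipate any serious obstacle: the argument mirrors the recourse proof, with the only quantitative change being the extra $\log n$ factor contributed by the $\Theta(\log n)$ boosting rounds, which propagates cleanly through the charging scheme; the one point worth spelling out carefully is that the per-iteration cost of $\texttt{AlmostCover}$ really is linear in $k|U_{j'}|$ rather than a higher power of $|U_{j'}|$.
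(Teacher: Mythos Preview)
Your proposal is correct and follows essentially the same approach as the paper: bound a single call to \texttt{AlmostCover}$(U_i)$ by $O(k|U_i|)$, multiply by the $\Theta(\log n)$ boosting rounds, charge the resulting $O(k|U_i|\log n)$ cost against the $\Omega(|U_i|)$ updates accumulated in $\textsc{count}[i]$ before the reconstruction, and sum the $O(k\log n)$ per-layer amortized cost over $\ell = O(\log(n/k))$ layers. Your extra step of telescoping the cascade via the geometric decay of $|U_{j'}|$ is a harmless variant of the paper's per-layer bookkeeping and, if anything, handles the case where a deeper layer is rebuilt as part of a cascade slightly more cleanly.
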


\begin{proof}
The subroutine $\texttt{AlmostCover}(U)$ can be implemented in $O(|S| \cdot |U|) = O(k \cdot |U|)$ time since we sample $|S| = O(k)$ centers, and compute the distance of any point $p \in U$ to $S$ in order to find $C$.
Now, fix an index $i \in [\ell-1]$.
Each time that the algorithm reconstructs $U_{i+1}$, $\Theta(\log n)$ independent calls are made to $\texttt{AlmostCover}(U_{i})$.
Hence, the time consumed for reconstruction of $U_{i+1}$ is at most $O(k \cdot |U_i|\cdot \log n)$.
Since the $\Sparsifier$ reconstructs $U_{i+1}$, whenever $\textsc{count}[i] \geq |U_{i}|/2$ and then reset $\textsc{count}[i]$ to zero, we conclude that the amortized running time of all calls to $\texttt{AlmostCover}(U_i)$ throughout the entire algorithm is $O(k \log n)$.
Now, since $\ell = O(\log(n/k))$ (according to \Cref{lem:size-guarantee}), the amortized update time of the $\Sparsifier$ is at most $ O(\ell \cdot k \cdot \log n) = O(k \log^2 n)$.
\end{proof}

\section*{Conclusions and Future Work}
We present a simple and almost optimal algorithm for dynamic $k$-center answering positively open questions in previous work. Natural future directions are improving the approximation factor, the recourse and the running time of our algorithm, or extending our results in the presence of outliers.

\section*{Acknowledgments}

Mart\'in Costa is supported by a Google PhD Fellowship.

\section*{Impact Statement}

This paper presents work whose goal is to advance the field of Machine Learning. The nature of our work is mainly theoretical, and thus our insights might find use in various areas.

\bibliography{bibliography}

\begin{thebibliography}{33}
\providecommand{\natexlab}[1]{#1}
\providecommand{\url}[1]{\texttt{#1}}
\expandafter\ifx\csname urlstyle\endcsname\relax
  \providecommand{\doi}[1]{doi: #1}\else
  \providecommand{\doi}{doi: \begingroup \urlstyle{rm}\Url}\fi

\bibitem[Ahmadian et~al.(2019)Ahmadian, Norouzi-Fard, Svensson, and
  Ward]{ahmadian2019better}
Ahmadian, S., Norouzi-Fard, A., Svensson, O., and Ward, J.
\newblock Better guarantees for k-means and euclidean k-median by primal-dual
  algorithms.
\newblock \emph{SIAM Journal on Computing}, 49\penalty0 (4):\penalty0
  FOCS17--97, 2019.

\bibitem[Ailon et~al.(2009)Ailon, Jaiswal, and Monteleoni]{ailon2009streaming}
Ailon, N., Jaiswal, R., and Monteleoni, C.
\newblock Streaming k-means approximation.
\newblock \emph{Advances in neural information processing systems}, 22, 2009.

\bibitem[Bateni et~al.(2023)Bateni, Esfandiari, Fichtenberger, Henzinger,
  Jayaram, Mirrokni, and Wiese]{BateniEFHJMW23}
Bateni, M., Esfandiari, H., Fichtenberger, H., Henzinger, M., Jayaram, R.,
  Mirrokni, V., and Wiese, A.
\newblock Optimal fully dynamic \emph{k}-center clustering for adaptive and
  oblivious adversaries.
\newblock In \emph{Proceedings of the 2023 {ACM-SIAM} Symposium on Discrete
  Algorithms (SODA)}, pp.\  2677--2727. {SIAM}, 2023.

\bibitem[Behnezhad et~al.(2019)Behnezhad, Derakhshan, Hajiaghayi, Stein, and
  Sudan]{BehnezhadDHSS19}
Behnezhad, S., Derakhshan, M., Hajiaghayi, M., Stein, C., and Sudan, M.
\newblock Fully dynamic maximal independent set with polylogarithmic update
  time.
\newblock In Zuckerman, D. (ed.), \emph{60th {IEEE} Annual Symposium on
  Foundations of Computer Science, {FOCS} 2019, Baltimore, Maryland, USA,
  November 9-12, 2019}, pp.\  382--405. {IEEE} Computer Society, 2019.

\bibitem[Bhattacharya et~al.(2022)Bhattacharya, Lattanzi, and
  Parotsidis]{BhattacharyaLP22}
Bhattacharya, S., Lattanzi, S., and Parotsidis, N.
\newblock Efficient and stable fully dynamic facility location.
\newblock In \emph{Advances in Neural Information Processing Systems 35: Annual
  Conference on Neural Information Processing Systems 2022, NeurIPS 2022, New
  Orleans, LA, USA, November 28 - December 9, 2022}, 2022.

\bibitem[Bhattacharya et~al.(2023{\natexlab{a}})Bhattacharya, Costa, Lattanzi,
  and Parotsidis]{ourneurips2023}
Bhattacharya, S., Costa, M., Lattanzi, S., and Parotsidis, N.
\newblock Fully dynamic $k$-clustering in $\tilde {O}(k)$ update time.
\newblock In \emph{Advances in Neural Information Processing Systems 36: Annual
  Conference on Neural Information Processing Systems 2023, NeurIPS 2023, New
  Orleans, LA, USA, December 10 - 16, 2023}, 2023{\natexlab{a}}.

\bibitem[Bhattacharya et~al.(2023{\natexlab{b}})Bhattacharya, Kiss, Saranurak,
  and Wajc]{soda/BhattacharyaKSW23}
Bhattacharya, S., Kiss, P., Saranurak, T., and Wajc, D.
\newblock Dynamic matching with better-than-2 approximation in polylogarithmic
  update time.
\newblock In \emph{Proceedings of the 2023 {ACM-SIAM} Symposium on Discrete
  Algorithms, {SODA} 2023, Florence, Italy, January 22-25, 2023}, pp.\
  100--128. {SIAM}, 2023{\natexlab{b}}.

\bibitem[Bhattacharya et~al.(2024{\natexlab{a}})Bhattacharya, Costa, Garg,
  Lattanzi, and Parotsidis]{focs/BCLP24}
Bhattacharya, S., Costa, M., Garg, N., Lattanzi, S., and Parotsidis, N.
\newblock Fully dynamic $k$-clustering with fast update time and small
  recourse.
\newblock In \emph{65th IEEE Symposium on Foundations of Computer Science
  (FOCS)}, 2024{\natexlab{a}}.

\bibitem[Bhattacharya et~al.(2024{\natexlab{b}})Bhattacharya, Goranci, Jiang,
  Qian, and Zhang]{BhattacharyaGJQZ2024}
Bhattacharya, S., Goranci, G., Jiang, S. H.-C., Qian, Y., and Zhang, Y.
\newblock Dynamic facility location in high dimensional euclidean spaces.
\newblock In \emph{Forty-first International Conference on Machine Learning},
  2024{\natexlab{b}}.

\bibitem[Bhattacharya et~al.(2025)Bhattacharya, Costa, and Farokhnejad]{BCF24}
Bhattacharya, S., Costa, M., and Farokhnejad, E.
\newblock Fully dynamic $k$-median with near-optimal update time and recourse.
\newblock In \emph{57th Annual ACM Symposium on Theory of Computing (STOC)},
  2025.
\newblock (to appear).

\bibitem[Biabani et~al.(2023)Biabani, Hennes, Monemizadeh, and
  Schmidt]{nips/BiabaniHM023}
Biabani, L., Hennes, A., Monemizadeh, M., and Schmidt, M.
\newblock Faster query times for fully dynamic k-center clustering with
  outliers.
\newblock In \emph{Advances in Neural Information Processing Systems 36: Annual
  Conference on Neural Information Processing Systems 2023, NeurIPS 2023, New
  Orleans, LA, USA, December 10 - 16, 2023}, 2023.

\bibitem[Borassi et~al.(2020)Borassi, Epasto, Lattanzi, Vassilvitskii, and
  Zadimoghaddam]{borassi2020sliding}
Borassi, M., Epasto, A., Lattanzi, S., Vassilvitskii, S., and Zadimoghaddam, M.
\newblock Sliding window algorithms for k-clustering problems.
\newblock \emph{Advances in Neural Information Processing Systems},
  33:\penalty0 8716--8727, 2020.

\bibitem[Byrka et~al.(2017)Byrka, Pensyl, Rybicki, Srinivasan, and
  Trinh]{byrka2017improved}
Byrka, J., Pensyl, T., Rybicki, B., Srinivasan, A., and Trinh, K.
\newblock An improved approximation for k-median and positive correlation in
  budgeted optimization.
\newblock \emph{ACM Transactions on Algorithms (TALG)}, 13\penalty0
  (2):\penalty0 1--31, 2017.

\bibitem[Censor{-}Hillel et~al.(2016)Censor{-}Hillel, Haramaty, and
  Karnin]{podc/Censor-HillelHK16}
Censor{-}Hillel, K., Haramaty, E., and Karnin, Z.~S.
\newblock Optimal dynamic distributed {MIS}.
\newblock In \emph{Proceedings of the 2016 {ACM} Symposium on Principles of
  Distributed Computing, {PODC} 2016, Chicago, IL, USA, July 25-28, 2016}, pp.\
   217--226. {ACM}, 2016.

\bibitem[Chan et~al.(2018)Chan, Guerquin, and Sozio]{ChanGS18}
Chan, T.~H., Guerquin, A., and Sozio, M.
\newblock Fully dynamic \emph{k}-center clustering.
\newblock In \emph{Proceedings of the 2018 World Wide Web Conference on World
  Wide Web (WWW)}, pp.\  579--587. {ACM}, 2018.

\bibitem[Chan et~al.(2022)Chan, Guerquin, Hu, and Sozio]{CGHS22}
Chan, T.-H.~H., Guerquin, A., Hu, S., and Sozio, M.
\newblock Fully dynamic $k$-center clustering with improved memory efficiency.
\newblock \emph{IEEE Transactions on Knowledge and Data Engineering},
  34\penalty0 (7):\penalty0 3255--3266, 2022.

\bibitem[Charikar et~al.(1999)Charikar, Guha, Tardos, and
  Shmoys]{charikar1999constant}
Charikar, M., Guha, S., Tardos, {\'E}., and Shmoys, D.~B.
\newblock A constant-factor approximation algorithm for the k-median problem.
\newblock In \emph{Proceedings of the thirty-first annual ACM symposium on
  Theory of computing}, pp.\  1--10, 1999.

\bibitem[Charikar et~al.(2003)Charikar, O'Callaghan, and
  Panigrahy]{charikar2003better}
Charikar, M., O'Callaghan, L., and Panigrahy, R.
\newblock Better streaming algorithms for clustering problems.
\newblock In \emph{Proceedings of the thirty-fifth annual ACM symposium on
  Theory of computing}, pp.\  30--39, 2003.

\bibitem[Cohen{-}Addad et~al.(2019)Cohen{-}Addad, Hjuler, Parotsidis, Saulpic,
  and Schwiegelshohn]{nips/Cohen-AddadHPSS19}
Cohen{-}Addad, V., Hjuler, N., Parotsidis, N., Saulpic, D., and Schwiegelshohn,
  C.
\newblock Fully dynamic consistent facility location.
\newblock In \emph{Advances in Neural Information Processing Systems 32: Annual
  Conference on Neural Information Processing Systems 2019, NeurIPS 2019,
  December 8-14, 2019, Vancouver, BC, Canada}, pp.\  3250--3260, 2019.

\bibitem[Cruciani et~al.(2024)Cruciani, Forster, Goranci, Nazari, and
  Skarlatos]{CFGNS24}
Cruciani, E., Forster, S., Goranci, G., Nazari, Y., and Skarlatos, A.
\newblock Dynamic algorithms for \emph{k}-center on graphs.
\newblock In \emph{Proceedings of the 2024 {ACM-SIAM} Symposium on Discrete
  Algorithms, {SODA} 2024, Alexandria, VA, USA, January 7-10, 2024}, pp.\
  3441--3462. {SIAM}, 2024.

\bibitem[Fichtenberger et~al.(2021)Fichtenberger, Lattanzi, Norouzi{-}Fard, and
  Svensson]{soda/FichtenbergerLN21}
Fichtenberger, H., Lattanzi, S., Norouzi{-}Fard, A., and Svensson, O.
\newblock Consistent k-clustering for general metrics.
\newblock In \emph{Proceedings of the 2021 {ACM-SIAM} Symposium on Discrete
  Algorithms, {SODA} 2021, Virtual Conference, January 10 - 13, 2021}, pp.\
  2660--2678. {SIAM}, 2021.

\bibitem[Forster \& Skarlatos(2025)Forster and Skarlatos]{SkarlatosF25}
Forster, S. and Skarlatos, A.
\newblock {Dynamic Consistent $k$-Center Clustering with Optimal Recourse}.
\newblock In \emph{Annual ACM-SIAM Symposium on Discrete Algorithms (SODA)},
  2025.
\newblock (to appear).

\bibitem[Gan \& Golin(2024)Gan and Golin]{GM24}
Gan, J. and Golin, M.~J.
\newblock Fully dynamic \emph{k}-center in low dimensions via approximate
  furthest neighbors.
\newblock In \emph{2024 Symposium on Simplicity in Algorithms, {SOSA} 2024,
  Alexandria, VA, USA, January 8-10, 2024}, pp.\  269--278. {SIAM}, 2024.

\bibitem[Gonzalez(1985)]{tcs/Gonzalez85}
Gonzalez, T.~F.
\newblock Clustering to minimize the maximum intercluster distance.
\newblock \emph{Theor. Comput. Sci.}, 38:\penalty0 293--306, 1985.

\bibitem[Goranci et~al.(2021)Goranci, Henzinger, Leniowski, Schulz, and
  Svozil]{alenex/GoranciHLSS21}
Goranci, G., Henzinger, M., Leniowski, D., Schulz, C., and Svozil, A.
\newblock Fully dynamic \emph{k}-center clustering in low dimensional metrics.
\newblock In \emph{Proceedings of the Symposium on Algorithm Engineering and
  Experiments, {ALENEX} 2021, Virtual Conference, January 10-11, 2021}, pp.\
  143--153. {SIAM}, 2021.

\bibitem[Henzinger \& Kale(2020)Henzinger and Kale]{esa/HenzingerK20}
Henzinger, M. and Kale, S.
\newblock Fully-dynamic coresets.
\newblock In \emph{28th Annual European Symposium on Algorithms, {ESA} 2020,
  September 7-9, 2020, Pisa, Italy (Virtual Conference)}, volume 173 of
  \emph{LIPIcs}, pp.\  57:1--57:21. Schloss Dagstuhl - Leibniz-Zentrum
  f{\"{u}}r Informatik, 2020.

\bibitem[Hochbaum \& Shmoys(1986)Hochbaum and Shmoys]{jacm/HochbaumS86}
Hochbaum, D.~S. and Shmoys, D.~B.
\newblock A unified approach to approximation algorithms for bottleneck
  problems.
\newblock \emph{J. {ACM}}, 33\penalty0 (3):\penalty0 533--550, 1986.

\bibitem[Jain \& Vazirani(2001)Jain and Vazirani]{jain2001approximation}
Jain, K. and Vazirani, V.~V.
\newblock Approximation algorithms for metric facility location and k-median
  problems using the primal-dual schema and lagrangian relaxation.
\newblock \emph{Journal of the ACM (JACM)}, 48\penalty0 (2):\penalty0 274--296,
  2001.

\bibitem[Lacki et~al.(2024)Lacki, Haeupler, Grunau, Rozhon, and
  Jayaram]{LHRJ23}
Lacki, J., Haeupler, B., Grunau, C., Rozhon, V., and Jayaram, R.
\newblock Fully dynamic consistent k-center clustering.
\newblock In \emph{ACM-SIAM Symposium on Discrete Algorithms (SODA)}, 2024.

\bibitem[Lattanzi \& Vassilvitskii(2017)Lattanzi and
  Vassilvitskii]{icml/LattanziV17}
Lattanzi, S. and Vassilvitskii, S.
\newblock Consistent k-clustering.
\newblock In \emph{Proceedings of the 34th International Conference on Machine
  Learning, {ICML} 2017, Sydney, NSW, Australia, 6-11 August 2017}, volume~70
  of \emph{Proceedings of Machine Learning Research}, pp.\  1975--1984. {PMLR},
  2017.

\bibitem[Mettu \& Plaxton(2002)Mettu and Plaxton]{MettuP02}
Mettu, R.~R. and Plaxton, C.~G.
\newblock Optimal time bounds for approximate clustering.
\newblock In \emph{Proceedings of the 18th Conference in Uncertainty in
  Artificial Intelligence (UAI)}, pp.\  344--351, 2002.

\bibitem[Pellizzoni et~al.(2025)Pellizzoni, Pietracaprina, and Pucci]{PPP25}
Pellizzoni, P., Pietracaprina, A., and Pucci, G.
\newblock Fully dynamic clustering and diversity maximization in doubling
  metrics.
\newblock \emph{ACM Trans. Knowl. Discov. Data}, 19\penalty0 (4), May 2025.
\newblock ISSN 1556-4681.

\bibitem[Shindler et~al.(2011)Shindler, Wong, and Meyerson]{shindler2011fast}
Shindler, M., Wong, A., and Meyerson, A.
\newblock Fast and accurate k-means for large datasets.
\newblock \emph{Advances in neural information processing systems}, 24, 2011.

\end{thebibliography}
\bibliographystyle{icml2025}

\clearpage
\onecolumn

\appendix
\section{State-of-the-Art for Dynamic $k$-Center}\label{app:centre summary}

\Cref{tab:kcenter} contains a more comprehensive summary of the state-of-the-art algorithms for fully dynamic $k$-center in general metric spaces that we discuss in \Cref{sec:intro}.

\begin{table*}[h]
\caption{
State-of-the-art for \textbf{fully dynamic $k$-center} in general metrics. We distinguish between amortized and worst-case guarantees and specify whether guarantees hold in expectation or with high probability (when they do not hold deterministically). We also specify whether the randomized algorithms can deal with adaptive or oblivious adversaries.}
\label{tab:kcenter}
\vskip 0.15in
\centering
\begin{tabular}{@{}ccccc@{}}
\toprule
Reference & Approx.~Ratio & Update Time & Recourse & Adversary \\
\midrule
\cite{BateniEFHJMW23} & $2 + \epsilon$ & $\tilde O(k/\epsilon)$ &  $\Omega(k)$ & oblivious \\
 &  & exp.~amortized &  & \\
\midrule
\cite{LHRJ23} & $O(1)$ & $\tilde O(\poly(n))$ & $4$  & deterministic \\
 & &  & worst-case &  \\
\midrule
\cite{SkarlatosF25} & $O(1)$ & $\tilde O(\poly(n))$ & $2$  & deterministic \\
 & &  & worst-case &  \\
\midrule
\cite{focs/BCLP24} & $O(\log n \log k)$ & $\tilde O(k)$ & $\tilde O(1)$ & adaptive \\
 & w.h.p. & amortized & amortized & \\
\bottomrule
\end{tabular}
\vskip -0.1in
\end{table*}

\section{A Useful Lemma}

\begin{lemma}\label{lem:OptU-compare-to-OptV}
    Given a metric space $(V,d)$, an integer $k \geq 1$ and $W \subseteq V$, we have that
    $\Opt_k(W) \leq 2 \cdot \Opt_k(V)$.
\end{lemma}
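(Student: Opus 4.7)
\textbf{Proof plan for \Cref{lem:OptU-compare-to-OptV}.} The plan is to take an optimal $k$-center solution for $V$ and ``project'' it onto $W$, losing only a factor of $2$ via the triangle inequality. This is a standard argument but I will spell out the construction explicitly.

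First, let $S^\star \subseteq V$ be an optimal solution to the $k$-center problem on $V$, so $|S^\star| \leq k$ and $\cl(S^\star, V) = \Opt_k(V)$. For each $s \in S^\star$, check whether there exists any point of $W$ within distance $\Opt_k(V)$ of $s$; if so, pick one such point arbitrarily and call it $\pi(s) \in W$, otherwise leave $\pi(s)$ undefined. Let $S' := \{ \pi(s) : s \in S^\star, \pi(s) \text{ defined} \} \subseteq W$. Since $|S'| \leq |S^\star| \leq k$, the set $S'$ is a feasible candidate solution for $k$-center on $W$.

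Next, I will bound $\cl(S', W)$. Fix any $x \in W \subseteq V$. By optimality of $S^\star$ on $V$, there exists some $s \in S^\star$ with $d(x, s) \leq \Opt_k(V)$. In particular, $x$ itself witnesses that there is a point of $W$ within distance $\Opt_k(V)$ of $s$, so $\pi(s)$ is defined and lies in $S'$. By the triangle inequality,
\[
d(x, S') \leq d(x, \pi(s)) \leq d(x, s) + d(s, \pi(s)) \leq \Opt_k(V) + \Opt_k(V) = 2 \cdot \Opt_k(V).
\]
Taking the maximum over $x \in W$ gives $\cl(S', W) \leq 2 \cdot \Opt_k(V)$, and since $S' \subseteq W$ with $|S'| \leq k$, this yields $\Opt_k(W) \leq \cl(S', W) \leq 2 \cdot \Opt_k(V)$.

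There is no real obstacle here; the only subtlety is making sure $\pi(s)$ is defined whenever it is needed, which is handled by observing that $x$ itself serves as a witness. The factor $2$ is the expected loss from ``snapping'' each center from $V$ to its closest available representative in $W$.
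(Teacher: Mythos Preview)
Your proof is correct and follows essentially the same approach as the paper: project an optimal solution $S^\star$ for $V$ onto $W$ and bound the cost via the triangle inequality. The only cosmetic difference is that the paper projects each center to its \emph{nearest} point in $W$ (so the projection is always defined when $W\neq\emptyset$), whereas you project to an arbitrary point within radius $\Opt_k(V)$ and handle definedness separately; both variants yield the same factor~$2$.
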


\begin{proof}
Assume $U^\star \subseteq V$ is such that $\cost(U^\star, V) = \Opt_k(V)$.
For each $u_i \in U^\star$, assume  $s_i \in W$ is such that $d(u_i,s_i) = d(u_i, W)$.
Let $S = \{s_1,s_2,\ldots, s_k\}$. Note that the size of $S$ might be less than $k$. $S$ is a feasible solution for the $k$-center problem on $W$.
Fix an $x \in W$, and let $u_i \in U^\star$ is such that $d(x, u_i) = d(x, U^\star)$.
We have
$$ d(x, s_i) \leq d(x, u_i) + d(u_i,s_i) \leq 2 \cdot d(x, u_i) =2 \cdot d(x, U^\star), $$
where the last inequality holds by $x \in W$ and the definition of $s_i$.
As a result, for each $x \in W$, we have $d(x,S) \leq 2 \cdot d(x, U^\star)$.
Finally,
$$  
\Opt_k(W) \leq \cost(S,W) = \max_{x \in W} d(x,S)\leq \max_{x \in W} 2 \cdot d(x,U^\star)
\leq \max_{x \in V} 2 \cdot d(x,U^\star) = 2 \cdot \Opt_k(V).\qedhere
$$
\end{proof}

\section{Improving the Recourse to $8+\epsilon$}\label{app:better rec}

In this section, we show how to use $\Sparsifier$ to obtain a new sparsifier called $\BSparsifier$, that can be combined with the algorithm $\Dynk$ to obtain a recourse of $8 + \epsilon$ for any arbitrary $0 < \epsilon \leq 1$.

\subsection{The Algorithm $\BSparsifier$}
Define $q = 4/\epsilon$.
We run the algorithm $\Sparsifier$ with the parameter $qk$, in order to maintain a space $W$ of size $\tilde{O}(qk)$ which is a $4$-approximation for $\Opt_{qk}(V)$ with $\tilde{O}(qk)$ update time.
Now, we show how to maintain the output $U$ of the $\BSparsifier$ algorithm.
First, we look at the current solution $W$ maintained by the $\Sparsifier$ on the current input space $V$, and let $U = W$ (and save all data structures inside the $\Sparsifier$ such as the centers in the $S_i$'s produced by internal calls to \texttt{AlmostCover} together with their clusters), which is a $4$-approximation for the $(qk)$-center problem on $V$.
Then, we perform $(q-1)k$ \emph{lazy} updates as follows.
If a point $x$ is inserted into $V$, we also insert $x$ into $U$, and if a point  $x$ is removed from $V$ and it is contained in $U$, we remove $x$ and add some arbitrary point in the cluster associated with $x$ to $U$ (if it is not empty).

After $(q-1)k$ updates, we reset the set $U$, which means that we look at the current solution $W$ maintained by the $\Sparsifier$ and let $U^\text{new} = W$.
Now, we will remove all of the points that were previously in $U$ and insert all of the points in $U^\text{new}$.
These are considered as internal updates for $\Dynk$, and during these updates, we do not report the solution maintained by $\Dynk$.
We report the solution only after we have removed points of the previous $U$ and inserted $U^\text{new}$.
Hence, the solution maintained by $\Dynk$ can change completely, but incur a recourse of at most $2k$ (the previous solution was a subset of size $k$ in $U$ and the new one is a subset of size $k$ in $U^\text{new}$).
Now, starting from this new set $U^\text{new}$, we continue with lazy updates for $(q-1)k$ many updates as described before, and this process continues.

The difference between $\BSparsifier$ and $\Sparsifier$ is that, although the recourse of $\Sparsifier$ can be a large constant, the amortized recourse of $\BSparsifier$ is small. 
The reason is that we copy the solution $W$ maintained by $\Sparsifier$ and then perform lazy updates that each lead to low recourse, for a long period of time ($(q-1)k$ updates).
Since the copied space $W$ is a good approximation of $\Opt_{qk}(V)$, we show that it remains a good approximation for $\Opt_k(V)$ during this longer period.

\subsection{Analysis of $\BSparsifier$}
In order to analyze the approximation guarantee of $\BSparsifier$, we need the following lemma, known as the Lazy-Updates Lemma.

\begin{lemma}[Lemma 3.3 in \cite{focs/BCLP24}]\label{lem:lazy-updates}
    Assume $V$ and $V'$ are two subsets of a ground metric space $\mathcal{V}$ such that $|V \oplus V'| \leq s$ for some $s \geq 0$.
    Then, for every $k \geq 1$, we have $\Opt_{k+s}(V) \leq \Opt_k(V')$.
\end{lemma}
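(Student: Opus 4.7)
The plan is to show $\Opt_{k+s}(V) \leq \Opt_k(V')$ by exhibiting an explicit $S \subseteq V$ with $|S| \leq k+s$ and $\cl(S,V) \leq \Opt_k(V')$. The case $|V| \leq k+s$ is trivial (take $S = V$), so I would assume $|V| > k+s$ throughout.

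I would fix an optimal $S^\star \subseteq V'$ for $\Opt_k(V')$ together with its induced clustering $\{C_y\}_{y \in S^\star}$ of $V'$. My natural candidate is $S := (S^\star \cap V) \cup (V \setminus V') \cup X$, where $X$ contains one substitute $x_y \in C_y \cap V$ for each ``bad'' center $y \in S^\star \cap (V' \setminus V)$ (whenever this intersection is non-empty). The three pieces represent, respectively, surviving optimal centers, inserted points serving as singleton centers, and replacements for deleted optimal centers drawn from their own clusters. A careful count gives $|S| \leq (|S^\star| - |X|) + |V \setminus V'| + |X| \leq k + |V \setminus V'| \leq k+s$.

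Next I would bound $\cl(S, V)$ by case analysis on $x \in V$. Inserted points $x \in V \setminus V'$ are trivially covered since $x \in S$. For $x \in V \cap V'$ whose nearest optimal center $y \in S^\star$ still lies in $V$, the center $y$ itself is in $S$, giving $d(x, S) \leq d(x, y) \leq \Opt_k(V')$. For $x \in V \cap V'$ whose nearest optimal center $y$ was deleted, the membership $x \in C_y \cap V$ ensures the substitute $x_y \in S$ exists, and the triangle inequality yields $d(x, x_y) \leq d(x, y) + d(y, x_y) \leq 2\,\Opt_k(V')$.

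This last case is the main obstacle: the substitution step only gives a factor-$2$ bound rather than the tight factor-$1$ claimed by the lemma. Note that in the insertion-only setting ($V' \subseteq V$) the cleaner construction $S := S^\star \cup (V \setminus V')$ already satisfies $\cl(S, V) \leq \Opt_k(V')$ directly, since $S^\star \subseteq V$ then. For the general case, obtaining the tight bound appears to require a subtler global argument---for instance, an induction on $s$ that handles insertions cleanly via the easy observation $\Opt_{k+1}(V' \cup \{x\}) \leq \Opt_k(V')$ and handles deletions by a swap or amortized-counting argument as in~\cite{focs/BCLP24}. I expect the deletion step to be the technically delicate part of the proof, since a pure triangle-inequality replacement is demonstrably lossy (one can construct equilateral-type clusters around a deleted center where any single surviving substitute is at distance $2\,\Opt_k(V')$ from some cluster-mate).
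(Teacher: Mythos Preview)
The paper does not supply its own proof of this lemma; it is imported verbatim from~\cite{focs/BCLP24}. So there is no ``paper's proof'' to compare your attempt against. That said, your analysis uncovers a genuine issue worth stating clearly.

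Your factor-$2$ obstacle is not an artifact of your particular construction---it is intrinsic. Under this paper's definition $\Opt_k(V) := \min_{S \subseteq V,\,|S|\le k}\cl(S,V)$ (centers restricted to the point set itself), the lemma as stated is \emph{false}, and your own ``equilateral-type'' example already proves it. Concretely: take $\mathcal V = \{y,x_1,x_2,x_3\}$ with $d(y,x_i)=1$ and $d(x_i,x_j)=2$ for $i\neq j$; set $V'=\mathcal V$, $V=\{x_1,x_2,x_3\}$, $k=1$, $s=1$. Then $\Opt_1(V')=1$ (center $y$) while $\Opt_{2}(V)=2$, so $\Opt_{k+s}(V)=2>1=\Opt_k(V')$. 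Hence the ``subtler global argument'' you are hoping for does not exist under the $S\subseteq V$ constraint; your factor-$2$ bound is tight, and you should not chase a sharper one.

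The lemma \emph{is} correct if centers may be drawn from the ambient space $\mathcal V$ rather than from $V$---which is presumably the convention in~\cite{focs/BCLP24}, and is why the statement explicitly posits a ``ground metric space $\mathcal V$''. Under that reading the proof is exactly the one-line argument you already gave for the insertion-only case, now valid in general: let $S^\star\subseteq\mathcal V$ realize $\Opt_k(V')$ and set $S:=S^\star\cup(V\setminus V')\subseteq\mathcal V$. Then $|S|\le k+|V\setminus V'|\le k+s$, and every $x\in V$ is either in $V\setminus V'\subseteq S$ (distance $0$) or in $V\cap V'\subseteq V'$ and hence within $\Opt_k(V')$ of $S^\star\subseteq S$. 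No deletion case, no triangle inequality, no loss.
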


\begin{lemma}
    The approximation ratio of $\BSparsifier$ is $4$.
\end{lemma}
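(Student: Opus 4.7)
The plan is to show that throughout each lazy phase of $\BSparsifier$, the maintained set $U_t$ is a $4$-approximation for $k$-center on the current space $V_t$. Let $V_0$ denote the snapshot of $V$ at the last reset, and let $W_0$ be the copy of the sparsifier's output at that moment, together with the saved cluster data from the internal calls to $\texttt{AlmostCover}$. By \Cref{lem:apprx-spars} applied to $\Sparsifier$ with parameter $qk$, we have $\cl(W_0, V_0) \leq 4 \cdot \Opt_{qk}(V_0)$ with high probability. Since at most $t \leq (q-1)k$ updates separate $V_0$ and $V_t$, \Cref{lem:lazy-updates} yields $\Opt_{qk}(V_0) \leq \Opt_k(V_t)$, so it suffices to prove $\cl(U_t, V_t) \leq 4 \cdot \Opt_{qk}(V_0)$.

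Next, I would fix an arbitrary $v \in V_t$ and split into cases. If $v \notin V_0$, then $v$ was inserted during the lazy phase and is in $U_t$, giving $d(v, U_t) = 0$. If $v \in V_0 \cap V_t$ with $v \in W_0$, then $v$ has remained in $U$ since the reset (as $v$ was never deleted from $V$), so once again $v \in U_t$. The only nontrivial case is $v \in (V_0 \cap V_t) \setminus W_0$, in which $v$ belongs to the cluster $C_w$ of some center $w \in S_i$ constructed by a call to $\texttt{AlmostCover}(U_i)$ when $W_0$ was built at the reset.

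The key geometric step is the diameter bound $\diam(C_w) \leq 2r$, where $r$ is the covering radius of that $\texttt{AlmostCover}$ call; this holds because every member of $C_w$ lies in $B(w,r)$ by construction. By \Cref{claim:output-almost-cover} applied at parameter $qk$ and boosted via the $\Theta(\log n)$ independent calls inside \Cref{alg:reconstruction}, the radius satisfies $r \leq \mu_{qk}^{1/2}(U_i)$ with high probability. Chaining \Cref{claim:subset-opt-to-full-opt} and \Cref{claim:mu-to-opt} then gives $r \leq 2 \cdot \Opt_{qk}(V_0)$, so $\diam(C_w) \leq 4 \cdot \Opt_{qk}(V_0)$. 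Since $v \in C_w \cap V_t$ witnesses that the cluster still has an active member, the replacement rule of $\BSparsifier$ keeps some $w^\star \in C_w \cap U_t$, and hence $d(v, U_t) \leq d(v, w^\star) \leq \diam(C_w) \leq 4 \cdot \Opt_{qk}(V_0) \leq 4 \cdot \Opt_k(V_t)$, completing the argument.

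The main obstacle will be cleanly justifying that whenever $C_w \cap V_t$ is non-empty, some element of the \emph{frozen} cluster $C_w$ remains in $U_t$. This should be handled by induction on deletions of the current representative: each such deletion triggers a replacement by a still-active member of $C_w$, so the invariant is preserved as long as at least one active cluster member remains. A secondary care point is taking a union bound over the $O(\log(n/k))$ cluster-forming calls to $\texttt{AlmostCover}$ so that the radius bound holds simultaneously for every cluster of interest with high probability.
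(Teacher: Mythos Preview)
Your overall strategy matches the paper's: handle inserted points trivially, for surviving points of $V_0$ use the cluster-diameter bound together with the replacement invariant, and then invoke \Cref{lem:lazy-updates} to pass from $\Opt_{qk}(V_0)$ to $\Opt_k(V_t)$. The paper black-boxes the cluster step by citing the proof of \Cref{lem:apprx-spars} (which already shows that swapping a center for any other member of its cluster keeps the output $4$-approximate with respect to the \emph{current} space $V_0$), whereas you re-derive it. The replacement invariant you flag as an ``obstacle'' is indeed the right thing to verify by induction, and your case $v\in W_0$ is fine.

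There is, however, a real gap in your re-derivation. The call to $\texttt{AlmostCover}(U_i)$ that created the cluster $C_w$ did \emph{not} happen ``when $W_0$ was built at the reset'': the $\Sparsifier$'s layers are rebuilt only periodically, and at the reset $\BSparsifier$ merely \emph{copies} the already-existing clusters. Consequently \Cref{claim:output-almost-cover} only gives $r \leq \mu_{qk}^{1/2}(U_i^{\mathrm{old}})$, where $U_i^{\mathrm{old}}$ is the state of that layer at the time of the call, and in general $U_i^{\mathrm{old}} \not\subseteq V_0$ (points may have been deleted since), so \Cref{claim:subset-opt-to-full-opt} cannot be applied to it. You need to insert \Cref{claim:OldHalf-to-NewFull}: the reconstruction-counter bound guarantees $|U_i^{\mathrm{old}} \oplus U_i^{\mathrm{reset}}| \leq |U_i^{\mathrm{reset}}|/4$, which lets you pass from $\mu_{qk}^{1/2}(U_i^{\mathrm{old}})$ to $\mu_{qk}^{1}(U_i^{\mathrm{reset}})$; only then, since $U_i^{\mathrm{reset}} \subseteq V_0$, does your chain via \Cref{claim:subset-opt-to-full-opt} and \Cref{claim:mu-to-opt} go through. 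This is exactly the step the paper absorbs by citing the existing $\Sparsifier$ analysis rather than repeating it.
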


\begin{proof}
Consider a point in time where we set $U = W$ (the solution maintained by the $\Sparsifier$).
Assume $V^0$ is the current input space and $U^0 = U$.
It is sufficient to show that the maintained solution $U$ is always a $4$ approximation for the $k$-center problem on the current space $V$ after $i$ updates for any arbitrary $0 \leq i \leq (q-1)k$.
Fix $i$ and let $U'$ and $V'$ be the maintained solution and the input metric space after $i$ updates, respectively.
Let $I := V' \setminus V^0$ be the set of inserted points to $V$. Assume $D := U^0 \setminus U'$ is the set of deleted points from $V$ that affect $U$, and are replaced with arbitrary points in their clusters.
Hence, $U' = U^0 - D + R + I$, where $R$ is the set of centers that are replaced with centers in $D$.

According to the procedure of the $\BSparsifier$, since every point in $I$ is inserted into $U$, they do not increase the cost of the solution $U$, i.e.~$\cost(U', V') \leq \cost(U^0-D+R, V^0)$.
According to the $4$-approximation analysis for $\Sparsifier$ in \Cref{sec:analysis-20-apprx}, whenever a point $u \in D$ is replaced with any arbitrary point within its cluster, the solution remains a $4$-approximation (see the discussion at the end of \Cref{sec:analysis-20-apprx}).
Hence, $\cost(U^0-D+R,V^0) \leq 4 \cdot \Opt_{qk}(V^0)$.
Combining the two previous inequalities with \Cref{lem:lazy-updates}, for every $0 \leq i \leq (q-1)k$, we have that 
$$\cost(U',V') \leq \cost(U^0-D+R,V^0) \leq 4 \cdot \Opt_{qk}(V^0) \leq 4 \cdot \Opt_{qk-i}(V') \leq 4 \cdot \Opt_{k}(V'). $$
The last inequality follows since $qk-i \geq k$.
As a result, during the lazy updates, the bicriteria solution $U$ is always a $4$-approximation for the $k$-center problem on the original metric space.
\end{proof}

\begin{lemma}
    The amortized recourse of the algorithm obtained by composing $\Dynk$ with the sparsifier $\BSparsifier$ is at most $8+\epsilon$.
\end{lemma}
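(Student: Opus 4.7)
The plan is to decompose the recourse of the composed algorithm into contributions from the two modes of $\BSparsifier$, namely the lazy updates performed between consecutive resets and the resets themselves, and to bound each contribution separately. Setting $q = 4/\epsilon$, I will combine the two bounds and verify that the arithmetic collapses to $8 + \epsilon$ for $\epsilon \leq 1$.

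For the lazy updates, I will argue that each update to $V$ triggers at most $2$ updates to the set $U$ maintained by $\BSparsifier$: an insertion into $V$ simply adds one point to $U$, a deletion of a point not in $U$ changes nothing, and a deletion of a center in $U$ causes at most one removal together with at most one insertion of a replacement from the associated cluster. Feeding these updates to $\Dynk$ and applying the expected per-update recourse bound of $4$ from \Cref{thm:result1}, the expected recourse contribution of each lazy update in $V$ is at most $2 \cdot 4 = 8$.

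For the reset, the main observation is that, although a reset feeds $\tilde O(qk)$ internal insertions and deletions into $\Dynk$, the algorithm deliberately suppresses the solution during these internal operations and only reports the resulting $\Dynk$ output at the end. Hence the user-visible recourse of the reset is governed by the symmetric difference of the reported solutions $S$ just before and just after the reset, both of which have size at most $k$, yielding a deterministic upper bound of $2k$. Amortizing this over the $(q-1)k$ updates to $V$ that occurred since the previous reset gives an amortized contribution of $2k / ((q-1)k) = 2/(q-1) = 2\epsilon/(4-\epsilon)$, which is at most $\epsilon$ whenever $\epsilon \leq 1$ (indeed whenever $\epsilon \leq 2$). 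Summing the two contributions yields an amortized recourse of at most $8 + \epsilon$.

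The main obstacle will be the reset accounting. A naive bound that multiplies the $\Dynk$ per-update recourse of $4$ by the $\tilde O(qk)$ internal operations produced during a reset would give $\tilde O(qk)$, which is far too large to amortize to $\epsilon$. Overcoming this requires appealing to the buffered nature of $\BSparsifier$: because the intermediate $\Dynk$ outputs are not reported, only the net change in $S$ counts toward recourse, and this net change is bounded purely by the cardinality constraint $|S| \leq k$. Once this $2k$ deterministic bound is in place, the rest of the proof is a direct amortization and a short algebraic check.
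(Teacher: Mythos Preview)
Your proposal is correct and follows essentially the same approach as the paper: decompose into lazy updates (each $V$-update yields at most two $U$-updates, so the $\Dynk$ recourse bound of $4$ gives $8$) and resets (the reported solution changes by at most $2k$, amortized over $(q-1)k$ updates to give $2/(q-1) \leq \epsilon$). Your explicit verification of the inequality $2\epsilon/(4-\epsilon) \leq \epsilon$ for $\epsilon \leq 2$ is slightly more detailed than the paper, but the argument is otherwise identical.
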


\begin{proof}
    During the lazy updates, for every insertion in $V$, we have an insertion in $U$, and for every deletion from $V$, we have at most two updates in $U$ (deleting a center and replacing it with any arbitrary point within its cluster).
    Hence, the length of the input stream (updates on $U$) which is fed into $\Dynk$ is at most twice the length of the main input stream.
    We incur a recourse of at most $4$ in expectation w.r.t.~the input stream fed to $\Dynk$ according to the recourse analysis of the algorithm in \Cref{sec:analysis-without-sparsifier}, which concludes that the total amortized recourse w.r.t.~the main input stream is at most $8$.
    After restarting $U$, we incur a recourse of at most $2k$ since we have two solutions of size $k$. Note that the internal changes in the main solution during deletion of points in the previous $U$ and adding the points in the new $U$ does not count as recourse for the final algorithm, since we only report the final solution after all of these updates as the solution by the algorithm maintained.

    Since the above procedure happens only every $(q-1)k$ updates, the total amortized recourse incurred by these updates is at most $2/(q-1)$.
    As a result, the final recourse of the algorithm is $8 + 2/(q-1) \leq 8 + \epsilon$.
\end{proof}

\begin{remark}
    In many real-world applications of the $k$-center problem, the dynamic input space is always a subset of a static ground metric space (such as $\mathbb{R}^n$ in the Euclidean $k$-center problem), and it is allowed to open any arbitrary point from the ground metric space as a center.
    It is possible to tweak $\BSparsifier$ in order to provide a sparsifier which is not necessarily a subset of the current dynamic space $V$, but has the following guarantees.
    By feeding the output of this sparsifier into $\Dynk$, we still have a $20$ approximation algorithm.
    The amortized recourse of the final algorithm becomes $4+\epsilon$ instead of $8+\epsilon$.
    The tweak is simple and works as follows.
    Whenever a point $x$ is removed from the space and it is contained in a $S_i$, instead of replacing $x$ with any arbitrary point in the cluster of $x$, we just keep $x$ as it is.
    The only time that we remove $x$, is whenever the cluster of $x$ becomes empty (all of the points in the cluster of $x$, including  $x$ itself, are removed from the original space).
    With a similar argument, it is possible to show that any $8$-approximate solution w.r.t.~$U$ is a $20$-approximation solution w.r.t.~the original space.
    The recourse of $4+\epsilon$ also follows immediately with the previous arguments.
    As a result, if we are allowed to open centers from outside the current dynamic space $V$, it is possible to achieve a dynamic $k$-center algorithm with an approximation ratio of $20$ and a recourse of $4+\epsilon$.
\end{remark}

\end{document}